\definecolor{DPurple}{RGB}{97,38,103}
\setlist[description]{leftmargin=\parindent,labelindent=\parindent}
\definecolor{DOrange}{RGB}{145,20,20}
\definecolor{DPurple}{RGB}{79,21,95}
\definecolor{CBPurple}{HTML}{702A5A}
\definecolor{CBRed}{HTML}{EE442F}
\definecolor{CBBlue}{HTML}{63ABCE}
\newcommand{\parheading}[1]{\textit{#1}\,}
\newcommand{\AEH}{$\forall\exists$}
\newcommand{\FunImp}{\textsc{FunIMP}}
\newcommand{\esim}{\sim_{_{\exists}}}
\newcommand{\hltrip}[3]{S_\forall \vdash \left\{#1\right\}\ #2\ \{#3\}}
\newcommand{\semhltrip}[3]{S_\forall \models \left\{#1\right\}\ #2\ \{#3\}}
\newcommand{\hletrip}[3]{S_\exists \vdash \left[#1\right]\ #2\ \left[#3\right]_{\exists}}
\newcommand{\semhletrip}[3]{S_\exists \models \left[#1\right]\ #2\ \left[#3\right]_{\exists}}
\newcommand{\epre}[1]{\langle #1 \rangle \ }
\newcommand{\epost}[1]{\ \langle #1 \rangle}
\newcommand{\rhletrip}[4]{S_\forall, S_\exists \vdash \epre{#1} #2 \esim #3 \epost{#4}}
\newcommand{\rhletripnosigma}[4]{\epre{#1} #2 \esim #3 \epost{#4}}
\newcommand{\semrhletrip}[4]{S_\forall, S_\exists \models
\left\langle #1 \right\rangle #2 \esim #3 \left\langle#4 \right\rangle}
\newcommand{\seq}[2]{#1\textsf{; }#2}
\newcommand{\asgn}[2]{#1 \textsf{ := } #2}
\newcommand{\ite}[3]{\textsf{if } #1 \textsf{ then } #2 \textsf{ else } #3}
\newcommand{\while}[2]{\textsf{while } #1 \textsf{ do } #2 \textsf{ end}}
\newcommand{\whilenoend}[2]{\textsf{while } #1 \textsf{ do } #2}
\newcommand{\alt}{\ | \ }
\newcommand{\sskip}{\textsf{skip}}
\newcommand{\rhle}{\textsc{RHLE}}
\newcommand{\orhle}{\textsc{ORHLE}}
\newcommand{\productive}{\Downarrow_\exists}
\definecolor{darkblue}{HTML}{06038D}
\def\lstinlines|#1|{\text{\lstinline[basicstyle=\footnotesize]|#1|}}
\def\lstinlineGNI|#1|{\text{\lstinline[language=Java, style=Java,
mathescape=true]|#1|}}
\lstdefinestyle{rest}{
  basicstyle=\ttfamily,
  lineskip=-.05cm
}
\lstdefinestyle{Java}{
  language=Java,
  basicstyle=\ttfamily\small,
  keywordstyle=\bfseries\color{darkblue},
  lineskip=-.05cm,
  mathescape=true,
  morekeywords={assert, xor}
}
\lstdefinestyle{Python}{
  basicstyle=\ttfamily\small,
  lineskip=-.05cm
}
\lstdefinestyle{funimp}{
  basicstyle=\sffamily,  %
  breaklines=true, %
  columns=fullflexible, %
  keepspaces=true,
  escapechar=\&,
  literate=%
  {forall}{{$\forall$}}{1}
  {exists}{{$\exists$}}{1}
  {TOP}{{$\top$}}{1}
  {c_1}{{card1}}{1}
  {c_2}{{card2}}{1}
  {c_3}{{card3}}{1}
  {c_11}{{card1$_1$}}{1}
  {c_12}{{card2$_2$}}{1}
  {c_21}{{card2$_1$}}{1}
  {c_22}{{card2$_2$}}{1}
  {player_1}{{p$_1$}}{1}
  {player_2}{{p$_2$}}{1}
  {player_3}{{p$_3$}}{1}
  {hand_1}{{hand$_1$}}{1}
  {hand_2}{{hand$_2$}}{1}
  {<=}{{$\le$}}{1}
  {/\\}{{$\land$}}{1}
  {AND}{{$\land$}}{1}
  {==>}{{$\implies$}}{1}
  {->}{{$\to$}}{1}
  {++}{{$+\hspace{-3pt}+$}}{1}
  {<>}{{$\neq$}}{1}, %
  keywordstyle=\textbf, %
  morecomment=[s]{(*}{*)}, %
  moredelim=**[is][\color{StringIDs}]{@}{@},
  moredelim=[is][\SubScript]{SUB}{SUB},
  morekeywords={
     def,
     return,
   }
}
\lstdefinestyle{orhle}{
  basicstyle=\sffamily,  %
  breaklines=true, %
  columns=fullflexible, %
  keepspaces=true,
  escapechar=\&,
  basicstyle=\footnotesize\sffamily,
  keywordstyle=\bfseries,
  morecomment=[s]{(*}{*)}, %
  moredelim=**[is][\color{StringIDs}]{@}{@},
  moredelim=[is][\SubScript]{SUB}{SUB},
  morekeywords={
    end,
    ret,
    def,
    return,
    if,
    then,
    else,
    while,
    do,
    skip,
    prog,
    pre,
    post,
    endif,
    forall,
    exists,
    expected,
    aspecs,
    especs,
    call,
    fun
  }
}
\newcounter{daggerfootnote}
\newcommand{\apref}[1]{\renewcommand{\sectionautorefname}{Appendix}\autoref{#1}\renewcommand{\sectionautorefname}{Section}}
\def\sectionautorefname{Section}
\begin{document}

\title{RHLE: Modular Deductive Verification of Relational \AEH{}
       Properties}

\author{Robert Dickerson \and
        Qianchuan Ye \and
        Michael K. Zhang \and
        Benjamin Delaware}

\institute{Purdue University, West Lafayette, IN 47907, USA \\
          \email{rob@robd.io, ye202@purdue.edu, michael.k.zhang@alumni.purdue.edu, bendy@purdue.edu}}

\date{}
\maketitle
\vspace{-2em}
\begin{abstract}
  Hoare-style program logics are a popular and effective technique for
  software verification. Relational program logics are an instance of
  this approach that enables reasoning about relationships between the
  execution of two or more programs.  Existing relational program
  logics have focused on verifying that \emph{all} runs of a
  collection of programs do not violate a specified relational
  behavior. Several important relational properties, including
  refinement and noninterference, do not fit into this category, as
  they also mandate the \emph{existence} of specific desirable
  executions. This paper presents \rhle{}, a logic for verifying these
  sorts of relational \AEH{} properties. Key to our approach is a
  novel form of function specification that employs a variant of ghost
  variables to ensure that valid implementations exhibit certain
  behaviors. We have used a program verifier based on \rhle{} to
  verify a diverse set of relational \AEH{} properties drawn from the
  literature.
\end{abstract}

\section{Introduction}
\label{sec:Introduction}

Hoare-style program logics are a popular and effective verification
technique. Starting with Hoare's seminal paper~\cite{Hoare+69}, this
approach has been adapted to cover a variety of programming languages
and assertions~\cite{VST,Iris,Hoare-Java,Sep+Logic,Concur+Sep+Logic}.
These logics typically feature several pleasant properties: they can
be declaratively specified via a set of rules over the syntax of the
target programming language, they permit compositional reasoning over
individual program components, and they often admit effective
automated verification procedures. Most of these logics focus on
proving \textit{safety} properties of \textit{single} programs, i.e.,
that executing a program in a valid initial state never results in a
state violating a postcondition.

Not all program behaviors fall into this category, however. As one
example, consider the common scenario where a developer decides they
want to migrate a hand-rolled implementation of a function to one that
uses a third-party library. \autoref{fig:IntroRefinementExample}
gives a concrete example of this situation. The program on the left,
\lstinline|sample|$_1$, uses a random number generator to directly
sample a subset of an array. The program on the right,
\lstinline|sample|$_2$, opts to delegate the task to an external list
library which supports shuffling and constructing sublists.  While
\lstinline|sample|$_1$ works \textit{with replacement} (the same
elements may be sampled multiple times), \lstinline|sample|$_2$ works
\textit{without replacement} (an element may be sampled at most
once). In order to ensure that this change does not break things, the
developer may wish to verify that \lstinline|sample|$_2$ does not do
anything that \lstinline|sample|$_1$ could not, i.e., that the updated
function \emph{refines} the original. Notably, this refinement
property relates the behavior of \emph{multiple} programs. In
addition, it does not have the form of a standard safety property. The
developer does not want to enforce that \lstinline|sample|$_2$
produces \emph{every} permutation that the hand-rolled implementation
does; rather, they wish to ensure it does not start returning
previously impossible samples.

\begin{figure}[t!]
\begin{center}
\setlength{\tabcolsep}{8pt}
\begin{tabular}{c|c}
\begin{lstlisting}[style=Java]
int[] sample$_1$(int[] arr,
              int size) {
  assert(size <= arr.length);
  int[] samp = new int[size];
  for (i in [0..size]) {
    int j = randB(arr.length);
    samp[i] = arr[j];
  }
  return samp;
}
\end{lstlisting}
&
\begin{lstlisting}[style=Java]
int[] sample$_2$(int[] arr,
              int size) {
  assert(size <= arr.length);
  list = new List(arr);
  perm = list.permute();
  samp = perm.sublist(size);
  return samp.toArray();
}
\end{lstlisting}
\end{tabular}
\end{center}
\vspace{-1em}
\caption{An example migration of a function which randomly samples a
list of integers with replacement to a function which samples without
replacement. The original program (\lstinline|sample|$_1$) uses a
function which generates random numbers, while the migrated program
(\lstinline|sample|$_2$) uses a list abstraction with a
\lstinline{permute} operation.}
\label{fig:IntroRefinementExample}
\vspace{-1em}
\end{figure}

\begin{wrapfigure}{R}{\dimexpr 0.43\textwidth + 2\FrameSep + 2\FrameRule\relax}
\vspace{-3em}
\begin{framed}\centering
\begin{minipage}{\textwidth}
\vspace{-1em}
\begin{lstlisting}[language=Java, style=Java, mathescape=true]
int encode(int msg$^\texttt{H}$) {
  int key$^\texttt{H}$ = randB(MAX_INT);
  int enc$^\texttt{L}$ = msg$^\texttt{H}$ xor key$^\texttt{H}$;
  return (key$^\texttt{H}$, enc$^\texttt{L}$);
}
\end{lstlisting}
\vspace{-1em}
\end{minipage}
\end{framed}
\vspace{-3em}
\label{fig:IntroGNIExample}
\end{wrapfigure}

As another example, consider the \lstinline|encode| function on the
right which performs a simple xor cipher. This function takes a single
high-security argument, \lstinlineGNI|msg$^\texttt{H}$|, and returns a
pair of high-security and low-security results,
\lstinlineGNI|key$^\texttt{H}$| and
\lstinlineGNI|enc$^\texttt{L}$|, respectively. The
function encodes its argument by first generating a random key
(\lstinline|randB| returns a random value between 0 and its argument),
taking the \texttt{xor} of the key and the message, and finally
returning the key along with the encoded message.  The developer may
wish to guarantee an attacker can learn nothing about the secret
message given only the encoded message. Whether or not
\lstinlineGNI|encode| meets this \textit{generalized
  noninterference}~\cite{Mclean1996} property crucially depends on the
behavior of \lstinline|randB|: if the attacker knows this function
\emph{always} returns \lstinline|3|, for example, they can decipher
any encoded message. We can again frame this behavior as a relational
property between the executions of two programs (in this case calls to
\lstinlineGNI|encode| with arbitrary arguments
\lstinlineGNI|msg|$^\texttt{H}_1$ and
\lstinlineGNI|msg|$^\texttt{H}_2$): every execution of
\lstinlineGNI|encode(msg$^\texttt{H}_1$)| must have a corresponding
execution of \lstinlineGNI|encode(msg$^\texttt{H}_2$)| that returns
the same low-security encoded value.

In both examples, the desired behavior has the shape \emph{for all}
executions of some program, \emph{there exists} a corresponding
execution of a second program that is somehow related. Thus, we call
these properties \textit{relational \AEH{}} properties. While several
\emph{relational program logics} have been developed for reasoning
about the behavior of multiple
programs~\cite{Sousa+CHL,Benton+RHL,Barthe+pRHL}, all have focused on
relational \emph{safety} properties, i.e., that \emph{all} the final
states of multiple programs satisfy some relational postcondition.
Unfortunately, in the presence of nondeterminism, none of these logics
are capable of verifying relational \AEH{} properties such as
refinement and generalized noninterference. The need to reason about
nondeterminism naturally arises in the presence of external functions
like \lstinline|permute| in \autoref{fig:IntroRefinementExample},
where specifications are used to approximate the behavior of multiple
possible implementations.

This paper addresses this gap by introducing \rhle{}, a relational
program logic for reasoning about \AEH{} properties. Key to our
approach is a novel form of function specifications which approximate
the set of behaviors a valid implementation \emph{must} exhibit. These
specifications use a novel variant of ghost variables, which we call
\textit{choice variables}, that guarantee the existence of required
behaviors.  \rhle{} admits a modular reasoning principle, where any
properties verified against a set of function specifications continue
to hold whenever the program is linked to any satisfying
implementation. While techniques based on Constrained Horn
Clauses~\cite{Unno2021} and model checking~\cite{Lamport2021} have
recently been developed that are capable of reasoning about \AEH{}
properties, \rhle{} is, to the best of our knowledge, the first
Hoare-style program logic for doing so. We have used a verifier based
on \rhle{} to verify a range of \AEH{} properties including
refinement, noninterference (with and without delimited release),
semantic parameter usage, and flaky tests.

We begin by defining a core imperative language with function calls
(\autoref{sec:FunImp}) equipped with semantics for both over- and
under-approximating function behaviors
(\autoref{sec:ApproximatingBehaviors}). We next present \rhle{}, and a
corresponding verification algorithm for verifying \AEH{} properties
(\autoref{sec:Verification}). We evaluate our approach by applying an
implementation of this algorithm to verify a diverse set of relational
properties (\autoref{sec:Evaluation}). We conclude with an examination
of related work (\autoref{sec:RelatedWork}). We have formalized the
details of our approach in the Coq proof assistant; this development
is available in the supplementary materials of this paper. Our
verification tool and benchmark suite are also publicly
available~\cite{OrhleZenodo,OrhleBenchmarks}.
\section{The FunIMP Language}
\label{sec:FunImp}
\vspace{-.7em}
\begin{figure}[t!]
  \begin{equation*}
  \begin{aligned}[t]
    n & \in \mathbb{N}     \qquad   x, y \in \mathcal{V} \\
    f, g & \in \mathcal{N} \qquad   \phantom{x, }\sigma \in \mathcal{V} \rightarrow \mathbb{N} \\
    a & ::=~ n \alt x \alt a + a \alt a - a \alt a * a \\
    b & ::=~ \textsf{true} \alt \textsf{false} \\
      & \alt a = a \alt a < a \alt \neg b \alt b \wedge b \\
    \end{aligned}
    \qquad
    \begin{aligned}[t]
    s & ::= \sskip \alt \seq{s}{s} \\
          & \alt \ite{b}{s}{s} \\
          & \alt \while{b}{s}  \\
          & \alt x := a \alt x := \textsf{havoc} \alt x := f(\overline{a}) \\
    \mathit{FD} & ::=~ \textsf{\textbf{def} } f (\overline{x})\; \{s;
                  \textsf{\textbf{return} } a \}
  \end{aligned}
  \end{equation*}
  \vspace{-1em}
  \caption{Syntax of \FunImp{}.}
  \label{fig:Syntax}
    \vspace{-2em}
\end{figure}

We begin with the definition of \FunImp{}, a core imperative language
with function calls $x := f(\overline{a})$ and nondeterministic
variable assignment $x := \textsf{havoc}$. The full syntax of
\FunImp{} is presented in \autoref{fig:Syntax}. The calculus is
parameterized over disjoint sets of identifiers for program variables
$\mathcal{V}$ and function names $\mathcal{N}$. Functions have a fixed
arity. Function definitions consist of a sequence of statements
followed by an expression that computes the result of the
function. For brevity, we denote sequences $x_1, \ldots, x_n$ as
$\overline{x}$. For ease of presentation, we treat functions as
returning a single value, although it is straightforward to extend
\FunImp{} to allow for multiple return values:
$(x, y, \ldots) := f(\overline{a})$.  Our verification tool, ORHLE
(see \autoref{sec:Evaluation}), uses such an extension to model
functions which mutate their arguments.

The semantics of \FunImp{} programs are defined via a standard
big-step evaluation relation from initial to final program states.
States are mappings from variables to integers, and are usually
notated as $\sigma$. We write $[x \mapsto a]\sigma$ to refer to state
$\sigma$ updated with a mapping from $x$ to $a$. The evaluation rules
are parameterized over an \emph{implementation context}, a mapping $I
\in \mathcal{N} \rightarrow \mathit{FD}$ from function names to their
definitions, which is used to evaluate function calls:
\begin{mathpar}
  \inferrule*[right=ECall] {
    I(f) = \textsf{\textbf{def} } f(\overline{x})\, \{s; \textsf{\textbf{return} } e \} \\
    I \vdash \sigma, \overline{a} \Downarrow \overline{v} \\
    I \vdash [ \overline{x} \mapsto \overline{v}], s \Downarrow \sigma' \\
    I \vdash \sigma', e \Downarrow r \\
  }
  {
    I \vdash \sigma, y := f(\overline{a}) \Downarrow
    [y \mapsto r]\sigma
  }
\end{mathpar}
We use $\Downarrow$ for the evaluation relation of both expressions
and statements; $\sigma, e \Downarrow \sigma'$ holds when executing
$e$ on state $\sigma$ can result in state $\sigma'$. Since programs
may be nondeterministic, there may be multiple final states related to
a single initial state for a given program. Note that $\textsf{havoc}$
is the only source of nondeterminism when evaluating a \FunImp{}
program. The remaining evaluation rules for \FunImp{} are standard and
can be found in \apref{sec:FunImpSemantics}.

\section{Approximating \FunImp{} Behaviors}
\label{sec:ApproximatingBehaviors}
\vspace{-.7em}

In order to modularly reason about relational \AEH{} properties, we
first present semantics for capturing the possible executions of a
\FunImp{} program in \emph{any} valid implementation context. In order
to account for both ``for all'' and ``there exists'' behaviors of
functions, we rely on two kinds of specifications. To reason about
\emph{all} possible executions of a valid implementation, i.e., a
standard \emph{safety} property, we use a \emph{universal}
specification. For guarantees about the \emph{existence} of certain
executions, we use an \emph{existential} specification.

\subsection{Universal Executions}
Both kinds of specifications are parameterized over an assertion
language $\mathcal{A}$ on program states and a mechanism for
judging when a state satisfies an assertion. We write
$\sigma \models P$ to denote that a state $\sigma$ satisfies the
assertion $P$. The universal specifications used to reason about
programs on the ``for all'' side of \AEH{} properties are written as
$\mathit{FA} ::= \textsf{ax}_{\forall} \ f (\overline{x})\; \{P\}
\{Q\}$, where $P \in \mathcal{A}$ is a precondition with free
variables in $\overline{x}$ and $Q \in \mathcal{A}$ is a postcondition
with free variables in $\overline{x} \cup \{ \rho \}$. The
postcondition uses the distinguished variable $\rho$ to refer to the
value returned by $f$. Universal specifications promise client
programs that the valid implementations of a function will only
evaluate to states satisfying the postcondition when evaluated in a
starting state that satisfies the precondition.

\begin{definition}[$\forall-Compatibility$]
  A function definition
  $\textsf{def } f (\overline{x}) \{s; \textsf{return } r \}$ is
  $\forall$\textit{-compatible} with a universal specification
  $\textsf{ax}_{\forall}\; f (\overline{x}) \{P\} \{Q\}$ if only
  values satisfying $Q$ may be returned whenever $f$ is called with
  arguments satisfying $P$:
\begin{align*}
  \forall \sigma, \sigma'.\; & (\sigma \models P)
                    \; \land \;
                    (I \vdash \sigma, s \Downarrow \sigma')
                    \; \land \;
                    (\sigma', r \Downarrow v)
                               \implies
                             ([\rho \mapsto v]\sigma \models Q)
\end{align*}
\end{definition}

\noindent We say that an implementation context $I$ is $\forall$-compatible with
a context of universal specifications
$S_\forall \in \mathcal{N} \rightarrow \mathit{FA}$ when every definition in
$I$ is $\forall$-compatible with the corresponding specification in
$S_\forall$.

To characterize the set of possible behaviors of a program under any
$\forall$-compatible implementation context, we define a new
\emph{overapproximate} semantics for \FunImp{}, $\Downarrow_\forall$.
The evaluation rules of this semantics are based on $\Downarrow$, but they
use a universal specification context, $S_\forall$, instead of an
implementation context, and replace \textsc{ECall} with the following
two evaluation rules:
\begin{mathpar}
  \inferrule*[right=ECall$_{\forall 1}$]
  {
    S_{\forall}(f) = \textsf{ax}_{\forall}\; f(\overline{x})\, \{P\}\, \{Q\} \\
    S_{\forall} \vdash \sigma, \overline{a} \Downarrow_\forall \overline{v} \\
    [ \overline{x} \mapsto \overline{v}] \models P  \\
    [ \rho \mapsto r, \overline{x} \mapsto \overline{v}] \models Q  \\
  }
  {
    S_{\forall} \vdash \sigma, y := f(\overline{a}) \Downarrow_\forall [y \mapsto r]\sigma
  }

  \inferrule*[right=ECall$_{\forall 2}$]
  {
    S_{\forall}(f) = \textsf{ax}_{\forall}\; f(\overline{x})\, \{P\}\, \{Q\} \\
    S_{\forall} \vdash \sigma, \overline{a} \Downarrow_\forall \overline{v} \\
    [ \overline{x} \mapsto \overline{v}] \not \models P  \\
  }
  {
    S_{\forall} \vdash \sigma, y := f(\overline{a}) \Downarrow_\forall [y \mapsto r]\sigma
  }
\end{mathpar}
\noindent The first rule states that if a function is called with
arguments satisfying its precondition, it will return a value
satisfying its postcondition; otherwise, the second rule states that
it can return \emph{any} value. The latter case allows the
overapproximate semantics to capture evaluations where a function is
called with arguments that do not meet its
precondition. \apref{sec:FunImpSemantics} includes a complete listing
of the $\Downarrow_\forall$ relation.

Any final state of a program evaluated
under an implementation context $I$ which is $\forall$-compatible with
$S_{\forall}$ can also be produced using $\Downarrow_\forall$ and
$S_\forall$. Appealing to this intuition, we call the evaluations of a
\FunImp{} program $p$ using $\Downarrow_{\forall}$ the
\emph{overapproximate executions} of $p$ under $S_\forall$.
\begin{theorem}
  \label{thm:ACompat}
  When run under an implementation context $I$ that is
  $\forall$-compatible with specification context $S_{\forall}$ and
  an initial state $\sigma$, a program $p$ will either diverge or
  evaluate to a state $\sigma'$ which is also the result of one of its
  overapproximate executions under $S_{\forall}$.
\end{theorem}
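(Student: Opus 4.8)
The plan is to prove a stronger, more uniform statement by induction on the structure of the big-step derivation, since the theorem as stated mixes termination and divergence in a way that is awkward to induct on directly. Concretely, I would prove the following simulation lemma: for every statement $s$, every $\forall$-compatible pair $(I, S_\forall)$, and every initial state $\sigma$, if $I \vdash \sigma, s \Downarrow \sigma'$ then $S_\forall \vdash \sigma, s \Downarrow_\forall \sigma'$; and similarly for the evaluation of arithmetic and boolean expressions (which do not involve function calls, so the correspondence there is immediate, as both semantics share the same expression rules). The theorem then follows: running $p$ under $I$ either produces some final state $\sigma'$ — in which case the lemma hands us an overapproximate execution ending in $\sigma'$ — or it does not terminate, which is the ``diverge'' disjunct. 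Note that we do not need the converse direction (that $\Downarrow_\forall$ is sound for $\Downarrow$), only this one inclusion, which is exactly the ``overapproximation'' intuition.

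The induction proceeds by case analysis on the last rule of the derivation $I \vdash \sigma, s \Downarrow \sigma'$. The cases for $\sskip$, sequencing, conditionals, assignment $x := a$, and $x := \textsf{havoc}$ are routine: the two semantics use identical rules for these constructs (havoc in particular is handled the same way in both), so we simply reassemble the derivation using the induction hypotheses on the subderivations and the expression correspondence. The \textsf{while} case is the standard one for big-step reasoning: the last rule is either the loop-exit rule (guard evaluates to \textsf{false}, immediate) or the loop-unfolding rule, whose premises are a derivation for the body and a derivation for the remaining iterations of the same loop; both are structurally smaller, so the induction hypothesis applies to each and we rebuild the $\Downarrow_\forall$ derivation.

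The main obstacle is the function-call case, $x := f(\overline{a})$, which is where the two semantics genuinely differ. Here the $\Downarrow$ derivation ended with \textsc{ECall}, so we have $I(f) = \textsf{def } f(\overline{x})\,\{s_f; \textsf{return } e\}$, argument values $\overline{v}$ with $I \vdash \sigma, \overline{a} \Downarrow \overline{v}$, a body execution $I \vdash [\overline{x} \mapsto \overline{v}], s_f \Downarrow \sigma_f$, and a return evaluation $\sigma_f, e \Downarrow r$, yielding final state $[x \mapsto r]\sigma$. We must produce a $\Downarrow_\forall$ derivation of $S_\forall \vdash \sigma, x := f(\overline{a}) \Downarrow_\forall [x \mapsto r]\sigma$, i.e.\ pick between $\textsc{ECall}_{\forall 1}$ and $\textsc{ECall}_{\forall 2}$. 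First, the argument evaluations transfer by the expression correspondence, giving the same $\overline{v}$. Now split on whether $[\overline{x} \mapsto \overline{v}] \models P$, where $S_\forall(f) = \textsf{ax}_\forall\, f(\overline{x})\,\{P\}\,\{Q\}$. If it does not hold, we apply $\textsc{ECall}_{\forall 2}$, which permits \emph{any} return value, in particular $r$. If it does hold, we want $\textsc{ECall}_{\forall 1}$, which additionally requires $[\rho \mapsto r, \overline{x} \mapsto \overline{v}] \models Q$ — and this is precisely where $\forall$-compatibility is used: by definition, since $I$ is $\forall$-compatible with $S_\forall$, the definition of $f$ in $I$ is $\forall$-compatible with its specification, so from $[\overline{x} \mapsto \overline{v}] \models P$ together with the body execution to $\sigma_f$ and the return evaluation $\sigma_f, e \Downarrow r$, we conclude $[\rho \mapsto r]([\overline{x}\mapsto\overline{v}]) \models Q$, which is exactly the premise we need. (A minor technical point to address in passing: the $\forall$-compatibility definition is stated with a state $\sigma$ satisfying $P$ and the updated state $[\rho \mapsto v]\sigma$; one should check that only the bindings of $\overline{x}$ and $\rho$ matter for $P$ and $Q$, which holds because their free variables are confined to $\overline{x} \cup \{\rho\}$, so instantiating with $\sigma = [\overline{x} \mapsto \overline{v}]$ is legitimate.) Assembling these pieces completes the call case and hence the induction.
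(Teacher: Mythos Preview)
Your proposal is correct and follows essentially the same approach as the paper: induction on the derivation of $I \vdash \sigma, p \Downarrow \sigma'$, with all cases routine except \textsc{ECall}, which is handled by a case split on whether the precondition holds (using \textsc{ECall}$_{\forall 2}$ if not, and $\forall$-compatibility plus \textsc{ECall}$_{\forall 1}$ if so). Your write-up simply makes explicit the details the paper elides, including the free-variable observation needed to match the compatibility definition to the premises of \textsc{ECall}$_{\forall 1}$.
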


\subsection{Existential Executions}
Universal specifications approximate function calls on the ``for all''
side of \AEH{} properties by constraining what a compatible
implementation \emph{can} do. Existential specifications approximate
the ``there exists'' executions by describing the required values a
valid implementation \emph{must} be able to return. In order to
flexibly capture these behaviors, existential pre- and post-conditions
are indexed by a set of \textit{choice variables} $\overline{c}
\subseteq \mathcal{V}$. Each instantiation of these variables defines
a particular behavior that an implementation has to exhibit. The
syntax for writing an existential specification is:
$\mathit{FE} ::= \textsf{ax}_{\exists} \ f (\overline{x})\; [\overline{c}]\
\{P\} \{Q\}$.

We write $A[x/y]$ to denote the predicate $A$ with all free
occurrences of $x$ replaced with $y$. Intuitively, for any
instantiation $\overline{v}$ of choice variables $\overline{c}$, an
existential specification requires an implementation to produce at
least one value satisfying the specialized postcondition
$Q[\overline{v}/\overline{c}]$, when called with arguments that
satisfy the corresponding precondition
$P[\overline{v}/\overline{c}]$. This intuition is embodied in our
notion of compatibility for existential specifications:

\begin{definition}[$\exists$-Compatibility]
  A function definition
  $\textsf{def } f (\overline{x}) \{s; \textsf{return } r \}$ is
  $\exists$\textit{-compatible} with an existential specification
  $\textsf{ax}_{\exists}\; f (\overline{x}) [\overline{c}] \{P\}
  \{Q\}$ if, for every selection of choice variables $\overline{v}$,
  calling $f$ with arguments that satisfy
  $P[\overline{v}/\overline{c}]$ can return at least one value
  satisfying $Q[\overline{v}/\overline{c}]$:
\begin{align*}
  \forall \sigma, \overline{v}.\;
  & (\sigma \models P[\overline{v}/\overline{c}]) \implies
    \exists \sigma'.\; (I \vdash \sigma, s \Downarrow \sigma')
    \; \land \;
    (\sigma', r \Downarrow v)
    \land ([\rho \mapsto v]\sigma \models Q[\overline{v}/\overline{c}])
\end{align*}
\vspace{-3em} 
\end{definition}

\begin{figure}[t]
\begin{tabular}{c|c|c}
\begin{minipage}{.25\columnwidth}
\begin{lstlisting}[style=funimp, basicstyle=\small\sffamily, mathescape=true]
def randB(x) {
  skip;
  return 0
}
\end{lstlisting}
\end{minipage}
&
\begin{minipage}{.4\columnwidth}
\begin{lstlisting}[style=funimp, basicstyle=\small\sffamily, mathescape=true]
def randB(x) {
  r := havoc;
  while (x <= r) do r := r - x end;
  return r }
\end{lstlisting}
\end{minipage}&
  \begin{minipage}{.25\columnwidth}
\begin{lstlisting}[style=funimp, basicstyle=\small\sffamily, mathescape=true]
def randB(x) {
  r := havoc;
  return r
}
\end{lstlisting}
\end{minipage}
\end{tabular}
\vspace{-1em}

\caption{Implementations of a function which returns an integer within
  a bound.}
\label{fig:FunImplEx}
\vspace{-1em}
\end{figure}

\begin{example}
\label{ex:RandBSpecs}
To see how universal and existential specifications work together to
describe a function's behavior, consider a function \textsf{randB(x)}
which is intended to return some integer between 0 and its argument
\textsf{x}. We can write a universal specification requiring all
return values to be within the desired bound: \textsf{ax}$_\forall$
\textsf{randB}($x$) \{$0 < x$\} \{$0 \le \rho < x$\}. This does not,
however, \textit{guarantee} every value in this range is possible. To
express this requirement, we reify the choice of the random value
using an existential specification: \textsf{ax}$_\exists$
\textsf{randB}($x$) [$c$] \{$0 < x \land 0 \le c < x$\}
\{$\rho = c$\}. \autoref{fig:FunImplEx} lists a variety of possible
\textsf{randB} implementations; the first implementation is compatible
with the aforementioned universal specification and the third
definition is compatible with the existential specification, but only
the middle one satisfies both. Note how $c$ acts as a ghost variable
which constrains the choice of the random number. Thus, when reasoning
about a client of \textsf{randB}, we can select a concrete value for
$c$ that forces the desired result.
\end{example}

Equipped with a context of existential specifications
$S_\exists \in \mathcal{N} \rightarrow \mathit{FE}$ , we characterize
the set of behaviors a program \emph{must} exhibit under every
$\exists$-compatible implementation context via an underapproximate
semantics for \FunImp{} programs. The judgements of this semantics are
denoted as $S_\exists\; \vdash \sigma, p \productive \Sigma$, which
reads as: under context $S_\exists$ and initial state $\sigma$, the
program $p$ will produce at least one final state in the set of states
$\Sigma$. The evaluation rules of this semantics are given in
\autoref{fig:ProductivityRules}. Most of the rules in
\autoref{fig:ProductivityRules} adapt the \FunImp{} evaluation rules
to account for the fact that commands now produce \emph{sets} of
states from an initial state. For example, the evaluation rule for
sequences, \textsc{ESeq$_\exists$}, states that $s_2$ produces a final
state corresponding to every state in the set produced by $s_1$. The
rule for function calls, \textsc{ECall$_\exists$}, is the most
interesting: it \emph{chooses} one of the behaviors guaranteed by the
existential specification of a function and produces a \emph{set} of
final states for every return value consistent with that choice.

\begin{figure}[h]
    \begin{mathpar}
      \small
      \inferrule*[right=ESkip$_\exists$]
      {
      }
      {
        S_\exists \vdash \sigma, \sskip \productive
        \{ \sigma \}
      }

      \inferrule*[right=EHavoc$_\exists$]
      {  }
      {
        S_\exists \vdash \sigma, x := \textsf{havoc} \productive
        \{\sigma' \;|\; \exists v. [x \mapsto v]\sigma' \}
      }

      \inferrule*[right=EAssn$_\exists$]
      {
        \sigma, a \Downarrow v
      }
      {
        S_\exists \vdash \sigma, \asgn{x}{a} \productive
        \{ [x \mapsto v]\sigma \}
      }

      \inferrule*[right=EConsq$_\exists$]
      {  S_\exists \vdash \sigma, s \productive \Sigma \\
        \Sigma \subseteq \Sigma'
      }
      {
        S_\exists \vdash \sigma, s \productive \Sigma'
      }

      \inferrule*[right=ESeq$_\exists$]
      {
        S_\exists \vdash \sigma, s_1 \productive \Sigma \\
        \forall \sigma' \in \Sigma.\; S_\exists \vdash \sigma', s_2 \productive \Sigma'
      }
      {
        S_\exists \vdash \sigma, \seq{s_1}{s_2} \productive \Sigma'
      }

      \inferrule*[right=ELpT$_\exists$]
      {
        \sigma, b \Downarrow \textsf{true} \\
        S_\exists \vdash \sigma, c \productive \Sigma \\\\
        \forall \sigma' \in \Sigma.\; S_\exists \vdash \sigma', \while{b}{c} \productive \Sigma'
      }
      {
        S_\exists \vdash \sigma, \while{b}{c} \productive \Sigma'
      }

      \inferrule*[right=ELpF$_\exists$]
      {
        \sigma, b \Downarrow \textsf{false} \\
      }
      {
        S_\exists \vdash \sigma, \while{b}{c} \productive \{ \sigma \}
      }

      \inferrule*[right=EIfT$_\exists$]
      {
        \sigma, b \Downarrow \textsf{true} \\
        S_\exists \vdash \sigma, s_1 \productive \Sigma
      }
      {
        S_\exists \vdash \sigma,\ite{b}{s_1}{s_2} \productive \Sigma
      }

      \inferrule*[right=EIfF$_\exists$]
      {
        \sigma, b \Downarrow \mathtt{\bot} \\
        S_\exists \vdash \sigma, s_2 \productive \Sigma
      }
      {
        S_\exists \vdash \sigma,\ite{b}{s_1}{s_2} \productive \Sigma
      }

      \inferrule*[right=ECall$_\exists$]
      {
        S_\exists(f) = \textsf{ax$_\exists$ } f(\overline{x})\, [c]\, \{P\}\, \{ Q \} \\
        S_\exists \vdash \sigma, \overline{a} \Downarrow \overline{v} \\
        [ \overline{x} \mapsto \overline{v}] \models P[\overline{k}/\overline{c}]  \\
      }
      {
        S_\exists \vdash \sigma, y := f(\overline{a}) \productive \\
        \{ \sigma' \;|\; \exists r.\: \sigma' = [y \mapsto r]\sigma \;\land\;
        [\rho \mapsto r, \overline{x} \mapsto \overline{v}] \models Q[\overline{k}/\overline{c}]
        \}
      }
    \end{mathpar}
    \vspace{-2em}
    \caption{The existential evaluation relation.}
    \label{fig:ProductivityRules}
    \vspace{-1em}
\end{figure}

Every set of final states for a program $p$ produced by these
semantics under $S_{\exists}$ includes a possible final state of $p$
when evaluated under any $\exists$-compatible implementation context.
For this reason, we term the evaluations of $p$ using $\productive$
the \emph{underapproximate executions} of $p$ under $S_{\exists}$.
\begin{theorem}
  \label{thm:ECompat}
  If there is an underapproximate evaluation of program $p$ to a set
  of states $\Sigma$ from an initial state $\sigma$ under
  $S_{\exists}$, then $p$ must terminate in at least one final state
  $\sigma' \in \Sigma$ when it is run from $\sigma$ under an
  implementation context $I$ that is $\exists$-compatible with
  $S_{\exists}$.
\end{theorem}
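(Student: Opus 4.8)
The plan is to fix an arbitrary implementation context $I$ that is $\exists$-compatible with $S_\exists$ and then induct on the derivation of $S_\exists \vdash \sigma, p \productive \Sigma$, establishing in every case that some $\sigma' \in \Sigma$ satisfies $I \vdash \sigma, p \Downarrow \sigma'$. Doing the induction on the \emph{derivation} rather than on $p$ is what makes the termination claim fall out for free: because every derivation tree is well founded and \textsc{EWhileT$_\exists$} places strictly shallower derivations above the line (one for the body, one for the residual loop), a program that must diverge under $I$ simply has no such derivation. The axiomatic and syntax-directed cases are then routine. \textsc{ESkip$_\exists$} and \textsc{EWhileF$_\exists$} produce $\Sigma = \{\sigma\}$ and match the corresponding concrete rule verbatim; \textsc{EAssn$_\exists$} is immediate since arithmetic expressions contain no function calls and so evaluate identically under $S_\exists$ and $I$; and for \textsc{EHavoc$_\exists$} any concrete choice of value, say $[x \mapsto 0]\sigma$, is a legal havoc outcome and belongs to $\Sigma$.

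For the compound cases I would thread witnesses through the induction hypotheses. \textsc{ECondT$_\exists$}/\textsc{ECondF$_\exists$}: the guard evaluates the same way under $I$, so the IH on the taken branch supplies the witness unchanged. \textsc{ESeq$_\exists$}: the IH on $c_1$ gives some $\sigma_1$ in the intermediate set with $I \vdash \sigma, c_1 \Downarrow \sigma_1$; applying the IH to the $c_2$-subderivation rooted at that $\sigma_1$ gives $\sigma_2$ in the corresponding component of the union $\Sigma'$ with $I \vdash \sigma_1, c_2 \Downarrow \sigma_2$, and sequencing these concrete derivations yields $I \vdash \sigma, \seq{c_1}{c_2} \Downarrow \sigma_2$ with $\sigma_2 \in \Sigma'$. \textsc{EWhileT$_\exists$} is structurally the same, composing the body run with a run of the residual loop. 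The only mild wrinkle here is that these rules union over a possibly infinite family of subderivations indexed by the states of the intermediate set, but induction on an inductively defined relation accommodates this without difficulty.

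The crux is \textsc{ECall$_\exists$}, the one place $\exists$-compatibility is actually used. Suppose $S_\exists(f) = \textsf{ax}_\exists\, f(\overline{x})\,[\overline{c}]\,\{P\}\{Q\}$, the arguments evaluate to $\overline{v}$, the rule selects choice values $\overline{k}$ with $[\overline{x} \mapsto \overline{v}] \models P[\overline{k}/\overline{c}]$, and $\Sigma = \{\,[y \mapsto r]\sigma \mid [\rho \mapsto r, \overline{x} \mapsto \overline{v}] \models Q[\overline{k}/\overline{c}]\,\}$. Since $I$ is $\exists$-compatible with $S_\exists$, the definition $I(f) = \textsf{def } f(\overline{x})\,\{s;\, \textsf{return } e\}$ is $\exists$-compatible with $f$'s existential specification; instantiating that condition at the state $[\overline{x} \mapsto \overline{v}]$ with choice values $\overline{k}$ --- legitimate exactly because $[\overline{x} \mapsto \overline{v}] \models P[\overline{k}/\overline{c}]$ --- produces a terminating run of $s$ and an evaluation of $e$ to some $r$ with $[\rho \mapsto r, \overline{x} \mapsto \overline{v}] \models Q[\overline{k}/\overline{c}]$. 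Feeding these into \textsc{ECall} (and using that $\overline{a}$ evaluates to $\overline{v}$ under $I$ as well) gives $I \vdash \sigma, \asgn{y}{f(\overline{a})} \Downarrow [y \mapsto r]\sigma$, and the postcondition witness is precisely what certifies $[y \mapsto r]\sigma \in \Sigma$. I expect this alignment step --- matching the quantifier shape of $\exists$-compatibility (universally over states and choice values, existentially over one terminating run) against the rule's selected $\overline{k}$ and the comprehension defining $\Sigma$ --- to be the only genuine subtlety; the rest is bookkeeping that, dually to Theorem~\ref{thm:ACompat}, is cleanest to discharge in Coq.
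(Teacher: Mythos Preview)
Your proposal is correct and follows exactly the approach the paper takes: induction on the derivation of $S_\exists \vdash \sigma, p \productive \Sigma$, with \textsc{ECall$_\exists$} as the one substantive case discharged directly from $\exists$-compatibility. The paper's own proof is a two-line sketch that names only the \textsc{ECall$_\exists$} case; your write-up simply fills in the routine bookkeeping for the remaining constructors and makes explicit the witness-threading for \textsc{ESeq$_\exists$} and \textsc{EWhileT$_\exists$}.
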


\subsection{Approximating \AEH{} behaviors}

Taken together, the over- and under-approximate semantics allow us to
relate the \AEH{} behaviors of multiple client programs under every
$\forall$- and $\exists$-compatible implementation context. This
admits a modular reasoning principle, where if a set of clients can be
shown to exhibit some behaviors using the overapproximate and
underapproximate semantics, linking the client with any compatible
environment will continue to exhibit those behaviors. The key
challenge to ensuring these \AEH{} behaviors is identifying, for every
overapproximate execution, an appropriate selection of choice
variables that cause the underapproximate executions to evaluate to a
collection of final states satisfying a desired \AEH{} property.

\begin{example}
  Consider the second example from the introduction, and assume that
  \lstinlineGNI|randB| has the universal and existential
  specifications from \autoref{ex:RandBSpecs}. To ensure that
  \lstinlineGNI|encode| does not reveal anything about its secret
  input via its public output, it suffices to establish that for any
  universal execution of \lstinlineGNI|encode| on a specific input,
  every other possible input to \lstinlineGNI|encode| could produce
  the same encoded message under the existential semantics.
  The first execution begins with the statement \lstinlineGNI|int
  key$^\texttt{H}_\forall$ = randB(MAX_INT)| (for convenience, we
  annotate program variables from the first and second executions with
  the subscripts $\forall$ and $\exists$, respectively). By
  \textsc{ECall$_{\forall 1}$}, this statement will update
  \lstinlineGNI|key|$^\texttt{H}_\forall$ to hold a value between $0$
  and \texttt{MAX\_INT}. The function then encodes the message using
  this key, and returns the result. In order to show this leaks
  nothing, we need to establish a corresponding execution of
  \lstinlineGNI|encode| that returns this same result
  regardless of the value of its argument. In effect, this
  amounts to finding a strategy for instantiating the choice variable
  in \textsc{ECall$_\exists$} to assign an appropriate value to
  \lstinlineGNI|key|$^\texttt{H}_\exists$. In this case, the choice is
  straightforward: we need a $c$ such that $c$ \texttt{xor}
  \texttt{msg}$^\texttt{H}_\exists =
  $\texttt{enc}$^\texttt{L}_\forall$.  Using
  \texttt{msg}$^\texttt{H}_\exists$ \texttt{xor}
  \texttt{enc}$^\texttt{L}_\forall$ for $c$ in
  \textsc{ECall$_\exists$} achieves the desired result.  Using this
  strategy, we can construct an appropriate execution in response to
  \emph{every} execution of \lstinlineGNI|encode|. In contrast, if our
  existential specification were \textsf{ax}$_\exists$
  \textsf{randB}($x$) [ ] \{$0 < x$\} \{$0 \le \rho < x$\}, it would
  only guarantee the existence of a single result, and there would be
  no workable strategy.  Indeed, the first definition of
  \lstinlineGNI|randB| in \autoref{fig:FunImplEx} satisfies this
  specification, and \lstinlineGNI|encode| will always leak the full
  message when using this implementation!
\end{example}

\section{\rhle{}}
\label{sec:HLE}

\vspace{-.7em} We now present \rhle{}, a relational program logic for
proving that a collection of \FunImp{} programs exhibit some desired
set of \AEH{} behaviors. As a consequence of \autoref{thm:ACompat} and
\autoref{thm:ECompat}, this entails that properties established in
\rhle{} will continue to hold when the programs are linked with any
compatible implementation context.

\rhle{} specifications use \emph{relational} assertions (denoted
$\Phi, \Psi \in \mathcal{A}$) to relate the execution of multiple
programs. As normal assertions are predicates on a single state, a
relational assertion is a predicate on multiple states. Each program
in a RHLE triple operates over a distinct state space. To disambiguate
between variables that occur in multiple copies, shared variable names
are annotated with an identifier unique to each program. Following
existing convention~\cite{Sousa+CHL,Benton+RHL}, we use a natural
number to identify which state a variable belongs to. As an
example, the relational assertion $x_1 \le x_2$ is a binary predicate
over (at least) two states. This assertion is satisfied by any set of
two (or more) states where the value of $x$ in the first state is less
than or equal to the value of $x$ in the second.

\rhle{} triples have the form
$\semrhletrip{\Phi}{\overline{p_\forall}}{\overline{p_\exists}}{\Psi}$
and assert that \textit{for all} universal executions of the programs
$\overline{p_\forall}$, \textit{there exist} existential executions of
the programs $\overline{p_\exists}$ satisfying the relational
pre- and post-condition $\Phi$ and $\Psi$:
\begin{align*}
  \semrhletrip{\Phi}{\overline{p_\forall}}{\overline{p_\exists}}{\Psi}
  \equiv \quad
  &  \forall \overline{\sigma_\forall} \; \overline{\sigma_\exists} \;
  \overline{\sigma_\forall'}.\;\;
    \overline{\sigma_\forall}, \; \overline{\sigma_\exists} \models \Phi
    \:\land\:
    S_\forall \vdash \overline{\sigma_\forall}, \;
    \overline{p_\forall} \Downarrow_\forall \overline{\sigma_\forall'}
    \implies \\
  & \quad
  \exists \Sigma.\;
    S_\exists \vdash \overline{\sigma_\exists}, \; \overline{p_\exists}
    \Downarrow_\exists \Sigma\: \land\: \forall \sigma_\exists' \in \Sigma.\;
    \overline{\sigma_\forall'}, \; \overline{\sigma_\exists'} \models \Psi
\end{align*}

\begin{table}[t!]
  \begin{center}
    \begin{tabular}{ll}
      \textbf{Property}
      & \textbf{\rhle{} Assertion}
      \\\hline
      \footnotesize Refinement
      & \small
        $S_\forall,\: S_\exists \models
        \left\langle {\overline{x_1} = \overline{x_2}} \right\rangle
        \asgn{y_1}{f(\overline{x_1})} \esim \asgn{y_2}{f(\overline{x_2})}
        \left\langle {y_1 = y_2} \right\rangle$
      \\\hline
      \footnotesize Noninterference
      & \small
        $S_\forall,\: S_\exists \models
        \left\langle {low_1 = low_2} \right\rangle
        p_1 \esim p_2
        \left\langle {low_1 = low_2} \right\rangle$
      \\\hline
      \footnotesize Injectivity
      & \small
        $S_\forall,\: S_\exists \models
        \left\langle {x_1 \neq x_2} \right\rangle
        {\asgn{y_1}{f(x_1)} \circledast \asgn{y_2}{f(x_2)}}
        \esim {\sskip}
        \left\langle {y_1 \neq y_2} \right\rangle$
      \\\hline
      \footnotesize Nondeterminism
      & \small
        $S_\forall,\: S_\exists \models
        \left\langle {x_1 = x_2} \right\rangle
        {\sskip}
        \esim {\asgn{y_1}{f(x_1)} \circledast \asgn{y_2}{f(x_2)}}
        \left\langle {y_1 \neq y_2} \right\rangle$
      \\\hline
    \end{tabular}
  \end{center}
  \caption{Example \rhle{} assertions. In the second row, $low_x$
    refers to the low security state in program $p_x$; note the \AEH{}
    relationship must hold for \emph{any} pair of initial high
    security values, so $high_x$ is not constrained in the
    precondition.}
  \label{fig:RHLEAssertEx}
  \vspace{-2em}
\end{table}

\noindent We use $\circledast$ to delineate different programs on the
universal and existential sides of $\esim$ so that, e.g., a sequence
of $n$ programs $\overline{p}$ is also denoted as
$p_1 \circledast \ldots \circledast p_n $. For example, to assert the
program $\asgn{x}{\textsf{havoc}}$ is nondeterministic, we write a
RHLE triple with two copies of the program, adding a subscript to the
variable $x$ in each for clarity:
$\mathord{\boldsymbol{\cdot}} \models \left\langle {\top}
\right\rangle {\sskip} \esim
{\asgn{x_1}{\textsf{havoc}}\circledast\asgn{x_2}{\textsf{havoc}}}
\left\langle {x_1 \neq x_2} \right\rangle$.  This triple says that,
for all starting states and all executions of the trivial program
$\sskip$, there exist executions of the programs
$\asgn{x_1}{\textsf{havoc}}$ and $\asgn{x_2}{\textsf{havoc}}$ such
that $x_1 \neq x_2$ after both programs have executed. Note that
$\circledast$ is \emph{not} a concatenation operator; it does nothing
more than delineate multiple programs in a RHLE triple.
\autoref{fig:RHLEAssertEx} gives some additional examples of \rhle{}
assertions.

\begin{figure}
  \vspace{-1em}
  \begin{minipage}{.49\textwidth}
    \begin{mathpar}
      \small
      \inferrule*[right=Finish]
    {
    }
    {
      \rhletrip{\Phi}
          {\overline{\textrm{skip}}}
          {\overline{\textrm{skip}}}
          {\Phi}
        }
      \end{mathpar}
    \end{minipage}
    \begin{minipage}{.49\textwidth}
      \begin{mathpar}
        \small
        \inferrule*[right=SkipI]
        {
          \rhletrip{\Phi}
          {\overline{\seq{p_\forall}{\sskip}}}
          {\overline{\seq{p_\exists}{\sskip}}}
          {\Psi}
        }
        {
          \rhletrip{\Phi}
          {\overline{p_\forall}}
          {\overline{p_\exists}}
          {\Psi}
        }
      \end{mathpar}
    \end{minipage}
  \begin{mathpar}
    \small
    \inferrule*[right=Step$\forall$]
    {
      \forall \overline{\sigma}\; \overline{\sigma_\exists}.\:
      \hltrip{\Phi\: |_i\: \overline{\sigma},\; \overline{\sigma_\exists}}
      {s_i}
      {\Phi'\: |_i\: \overline{\sigma},\; \overline{\sigma_\exists}} \\
      \rhletrip{\Phi'}
      {p_1 \circledast \ldots \circledast s_i' \circledast \ldots \circledast s_n}
      {\overline{p_\exists}}
      {\Psi}
    }
    {
      \rhletrip{\Phi}
      {p_1 \circledast \ldots \circledast \seq{s_i}{s_i'} \circledast \ldots \circledast p_n}
      {\overline{p_\exists}}
      {\Psi}
    }

    \inferrule*[right=Step$\exists$]
    {
      \forall \overline{\sigma_\forall}\; \overline{\sigma}. \:
      \hletrip
      {\Phi\: |_i\: \overline{\sigma_\forall},\; \overline{\sigma}}
      {s_i}
      {\Phi'\: |_i\: \overline{\sigma_\forall},\; \overline{\sigma}} \\
      \rhletrip{\Phi'}
      {\overline{p_\forall}}
      {p_1 \circledast \ldots \circledast s_i' \circledast \ldots \circledast p_n}
      {\Psi}
    }
    {
      \rhletrip{\Phi}
      {\overline{p_\forall}}
      {p_1 \circledast \ldots \circledast \seq{s_i}{s_i'} \circledast \ldots \circledast p_n}
      {\Psi}
    }
  \end{mathpar}
  \vspace{-2em}
  \caption{Core \rhle{} proof rules.}
  \label{fig:RHLERules}
  \vspace{-1em}
\end{figure}

The core logic of \rhle{} is given in
\autoref{fig:RHLERules}. Relational proofs are built by reasoning
about the topmost statement of either one of the universally
quantified programs via the \textsc{Step}$\forall$ rule or one of the
existentially quantified programs using the \textsc{Step}$\exists$
rule. Once all program statements have been considered, final proof
obligations can be discharged using the \textsc{Finish} rule. The
\textsc{SkipI} rule is used to ensure that all programs end with
$\sskip$, so that \textsc{Finish} can be applied. Both \textsc{Step}
rules rely on non-relational logics for reasoning about the universal
$\hltrip{P}{p}{Q}$ and existential $\hletrip{P}{p}{Q}$ behaviors of
single statements; we will present the details of both logics
shortly. The \textsc{Step} rules employ a projection operation,
$\overline{\sigma} |_i \Psi$, which maps a relational assertion to a
non-relational one. Given a collection of $n$ states,
$\Psi\: |_i\: \overline{\sigma}$ is satisfied by any state $\sigma'$
which satisfies $\Psi$ when inserted at the $i$th position:
\[ \sigma' \models\ \Psi\: |_i\: \overline{\sigma} ~~\equiv \quad
\sigma_1,\ldots, \sigma_{i-1}, \sigma', \sigma_{i+1}, \ldots, \sigma_n \models
\Psi
\]
\noindent In effect, this operation ensures the states of the other programs
remain unchanged when reasoning about the $i$th program in the triple.

\paragraph{Universal Hoare Logic}
The program logic for universal executions has a standard partial
correctness semantics:
\begin{align*}
  \semhltrip{P}{p}{Q} ~~\equiv \quad &
  \forall \sigma, \sigma'.\;  \sigma \models P
  \wedge S_\forall \vdash \sigma, p \Downarrow_\forall \sigma'
  \implies \sigma' \models Q
\end{align*}
\noindent The rules of this logic are largely
standard\footnote{\apref{sec:HL} gives a full
  listing of the rules of this logic.}, except for the rule for
function calls, which uses a context of universal function
specifications:
\begin{mathpar}
  \inferrule*[right=$\forall$Spec]
  {
    S_{\forall}(f)= \textsf{ax}_{\forall}\; f (\overline{x}) \{P\} \{Q\}
  }
  {
    \hltrip{
      \begin{aligned}
      & P [\overline{a}/\overline{x}] ~~~\land \\
      \forall v. & Q [v/\rho; \overline{a}/\overline{x}]
      & \implies R[v/y]
      \end{aligned}
    }
    {\asgn{y}{f(\overline{a})}}{R}
  }
\end{mathpar}

\paragraph{Existential Hoare Logic}
\label{sec:EHL}
The assertions of our program logic for existential executions say
that, for any state meeting the precondition, there \textit{exists} an
execution of the program ending in a set of states meeting the
post-condition:
\begin{align*}
 \semhletrip{P}{p}{Q} ~~\equiv \quad &
 \forall \sigma.\; \sigma \models P \implies
\exists \Sigma.\; S_\exists \vdash \sigma, p \Downarrow_\exists \Sigma
~~~\land~~~
\forall \sigma' \in \Sigma.\; \sigma' \models Q
\end{align*}
\noindent These rules are largely standard \textit{total} Hoare logic
rules\footnote{The full existential logic is presented in
  \apref{sec:FullExistentialRules}.}, augmented with
a rule for calls to existentially specified functions:
\begin{mathpar}
\inferrule*[right=$\exists$Spec]
{
  S_\exists(f) = \textsf{ax}_\exists f(\overline{x})\; [\overline{c}]\; \{P\}\; \{Q\}
}
{
      \hletrip{
        \begin{minipage}{.34\columnwidth}
          \begin{align*}
            \exists \overline{k}.
            \;(
            &[\overline{x} \mapsto \overline{a}] \models P[\overline{k}/\overline{c}] \\
            ~~ \land ~~ & \exists v.
                          [\rho \mapsto v, \overline{x} \mapsto
                          \overline{a}] \models Q[\overline{k}/\overline{c}]
            \\
            ~~\land~~ &  \forall v.
                        [\rho \mapsto v, \overline{x} \mapsto
                        \overline{a}] \models
                        Q[\overline{k}/\overline{c}] \\
            & \quad\quad \implies
                        R[v/y]
                        )
          \end{align*}
        \end{minipage}
      }
      {\asgn{y}{f(\overline{a})}}{R}
}
\end{mathpar}

The precondition of this rule is quantified over instantiations
$\overline{k}$ of the specification's choice variables. The first of
the three conjuncts under this quantifier ensures that the statement
is executed in a state satisfying the function's precondition. The
next conjunct ensures that the function's post-condition is
inhabited. The final conjunct requires that every possible return
value satisfying the function's post-condition also satisfies the
triple's post-condition.

\begin{example}
  Given the existential specification \textsf{ax}$_\exists$
  \textsf{zeroOrOne}() [$c$] \{$c = 0 \lor c = 1$\} \{$\rho = c$\}, we
  can use \textsc{$\exists$Spec} (along with the rule for while loops
  given in \apref{sec:FullExistentialRules}) to prove the existential assertion
  $\hletrip{k=0}{\while{k <
      4}{\asgn{k}{k + \textsf{zeroOrOne()}}}}{k = 4}$. This loop
  \emph{could} loop forever by choosing to add $0$ to $k$ at every
  iteration. Nevertheless, by using measure $4-k$ with the
  well-founded relation $<$ and instantiating the choice variable with
  $1$ at each iteration, we can prove a terminating path through the
  program exists.
\end{example}

\subsection{Synchronous Rules}
\label{sec:sync+rules}
While the rules in \autoref{fig:RHLERules} are sufficient to reason
about relational properties, it is possible to lessen the verification
burden for structurally similar programs by employing
\emph{synchronous rules} which exploit structural similarities between
the programs being verified~\cite{Nagasamudram2021}. Reasoning over
similar control flow structures in lockstep can reduce the space of
states verification must consider and simplify loop invariants. This
is particularly useful when reasoning about
\emph{hyperproperties}~\cite{Clarkson+HYP}, or relational properties
on multiple executions of the \emph{same} program. In order to more
easily reason about structurally similar programs, \rhle{} also
includes synchronous rules inspired by the Cartesian loop logic
presented by Sousa and Dillig~\cite{Sousa+CHL}.
\apref{sec:SynchronousRules} includes a full listing of these rules.

\begin{example}
  \label{ex:SyncLoops}
  Consider proving that
  $\textsf{\while{(x < 10)}{\asgn{y}{y + \textsf{randB}(9)}}}$ refines
  $\textsf{\while{(x < 10)}{\asgn{y}{y + \textsf{randB}(5)};
      \asgn{y}{y + \textsf{randB}(6)}}}$. Intuitively, the first
  program refines the second because the bodies of the loops are
  themselves refinements. A proof using only the rules in
  \autoref{fig:RHLERules} is unable to take advantage of this
  intuition, however. Instead, the proof requires a sufficiently
  strong invariant characterizing the behavior of the entire loop on
  the left, and then an invariant for the righthand program that
  accounts for the behavior of individual iterations of the lefthand
  loop.

The \textsc{SyncLoops} rule is designed for this situation:
  \begin{mathpar}
    \inferrule*[right=SyncLoops]
    {
      \rhletrip{\mathbb{I} \land \bigwedge_{0 \le i \le n} b_i}
               {s_0 \circledast \cdots \circledast s_k}{s_{k+1}
                 \circledast \cdots \circledast s_n}
               {\mathbb{I}} \\
      \mathbb{I} \land \bigwedge_{0 \le i \le n} \neg b_i \implies \Psi \\
      \mathbb{I} \land \neg \bigwedge_{0 \le i \le n} b_i \implies \bigwedge_{0 \le i \le n} \neg b_i
    }
    {
      S_\forall, S_\exists \vdash \epre{\mathbb{I}}
      \while{b_0}{s_0} \circledast \cdots \circledast \while{b_k}{s_k} \\
      \phantom{S_\forall, S_\exists \vdash} \esim
      \while{b_{k+1}}{s_{k+1}} \circledast \cdots \circledast \while{b_n}{s_n}
      \epost{\Psi}
    }
  \end{mathpar}
  The first premise of this rule says that executing all loop bodies
  preserves some invariant $\mathbb{I}$, the second ensures the
  invariant is strong enough to imply the postcondition, and the third
  requires all loops to end on the same iteration. Since this
  invariant is reestablished after the execution of every loop body;
  the invariant that $\textsf{y}_1$ and $\textsf{y}_2$ are equal at
  each iteration suffices to verify this example.
\end{example}

\subsection{Soundness}
The combination of the core and synchronous rules provide a
sound methodology for reasoning about \AEH{} properties:
\begin{theorem}[RHLE is Sound]
  Suppose
  $\rhletrip{\Phi}{\overline{p_\forall}}{\overline{p_\exists}}{\Psi}$. Then,
  for any function context $I$ compatible with $S_\forall$ and
  $S_\exists$, any set of initial states $\overline{\sigma_\forall}$
  and $\overline{\sigma_\exists}$ satisfying $\Phi$, and every
  collection of final states $\overline{\sigma_\forall'}$ of
  $\overline{p_\forall}$, there must exist a collection of final
  states produced by $\overline{p_\exists}$ that, together with
  $\overline{\sigma_\forall'}$, satisfies the relational post-condition
  $\Psi$.
\end{theorem}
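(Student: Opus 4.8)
The plan is to prove soundness in two stages. First, establish an auxiliary lemma stating that \rhle{} derivability implies validity with respect to the \emph{approximate} semantics: whenever $\rhletrip{\Phi}{\overline{p_\forall}}{\overline{p_\exists}}{\Psi}$ is derivable from the rules of \autoref{fig:RHLERules}, the semantic judgment $\semrhletrip{\Phi}{\overline{p_\forall}}{\overline{p_\exists}}{\Psi}$ holds. Second, transport this validity to the concrete semantics using the over- and under-approximation results, Theorems~\ref{thm:ACompat} and~\ref{thm:ECompat}.

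For the first stage I would induct on the \rhle{} derivation. \textsc{Finish} is immediate: when every program is \sskip{}, the sole overapproximate execution and the unique singleton underapproximate execution both leave states unchanged, so $\Phi$ carries over to $\Psi=\Phi$. \textsc{SkipIntro} is a congruence, since appending \sskip{} alters neither the set of $\Downarrow_\forall$ final states nor any $\Downarrow_\exists$ result set. The two substantive cases are \textsc{Step}$\forall$ and \textsc{Step}$\exists$, each of which reduces to soundness of the corresponding non-relational logic. For \textsc{Step}$\forall$ I first need that $\hltrip{P}{s}{Q}$ implies every $\Downarrow_\forall$-execution of $s$ from a $P$-state ends in a $Q$-state --- a routine structural induction whose only nontrivial case is the function call, discharged using the two overapproximate call rules (\textsc{ECall$_{\forall1}$} and \textsc{ECall$_{\forall2}$}) together with $\forall$-compatibility. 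Then, given a $\Downarrow_\forall$-execution of $\seq{s_i}{s_i'}$ in slot $i$, I split it at the intermediate state, apply the projected premise on $s_i$ to move the relational assertion from $\Phi$ to $\Phi'$ (the projection lemma lets the states of the other programs be carried as fixed parameters), and appeal to the induction hypothesis on the premise about $s_i'$. \textsc{Step}$\exists$ is the mirror image, using soundness of the existential logic. One point is worth stressing: the semantic judgment has the shape $\forall\,\overline{\sigma_\forall'}.\,\exists\,\Sigma.\,\ldots$, so the existential witness $\Sigma$ may depend on the realized universal final states --- which is exactly what lets different universal branches drive different choice-variable instantiations --- and the induction delivers this dependence with no extra bookkeeping.

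I expect the main obstacle to be soundness of the existential Hoare logic, in particular the \textsc{$\exists$While} rule, which must exhibit an \emph{actual} finite $\Downarrow_\exists$-derivation witnessing termination along some path even though the body may contain effectively unbounded-nondeterministic existential calls. I would handle this by well-founded induction on the variant $a$ along $R$: the augmented body triple guarantees that from any state meeting $P\land b\land M\,a$ there is a body evaluation into states that all satisfy $P$ and have strictly $R$-smaller variant, so unrolling \textsc{EWhileT$_\exists$} along the resulting descending chain and capping it with \textsc{EWhileF$_\exists$} builds the derivation. The \textsc{$\exists$Spec} case then requires matching the three conjuncts of its precondition against the side conditions of \textsc{ECall$_\exists$}, using the choice-variable instantiation $\overline{k}$ promised by the precondition as the value chosen by that rule; and the set-valued judgments (e.g.\ \textsc{ESeq$_\exists$} quantifying over every state produced by the first command) must be threaded carefully so that the resulting $\Sigma$ lands entirely inside the postcondition.

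For the second stage, fix a context $I$ compatible with both $S_\forall$ and $S_\exists$, initial states $\overline{\sigma_\forall},\overline{\sigma_\exists}$ satisfying $\Phi$, and a vector $\overline{\sigma_\forall'}$ of concrete final states of $\overline{p_\forall}$ under $I$. Since each $p_\forall$ reaches a final state it does not diverge, so Theorem~\ref{thm:ACompat}, applied componentwise (the components do not interact), gives $S_\forall\vdash\overline{\sigma_\forall},\overline{p_\forall}\Downarrow_\forall\overline{\sigma_\forall'}$. Feeding this and $\Phi$ into the auxiliary lemma yields a $\Sigma$ with $S_\exists\vdash\overline{\sigma_\exists},\overline{p_\exists}\Downarrow_\exists\Sigma$ and $\overline{\sigma_\forall'},\overline{\sigma_\exists'}\models\Psi$ for every $\overline{\sigma_\exists'}\in\Sigma$. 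Finally Theorem~\ref{thm:ECompat}, lifted to the vector $\overline{p_\exists}$ (again routine, as the programs are independent), gives that under $I$ the programs $\overline{p_\exists}$ terminate in at least one $\overline{\sigma_\exists'}\in\Sigma$; this tuple of final states, together with $\overline{\sigma_\forall'}$, satisfies $\Psi$ by the choice of $\Sigma$, which is exactly what is required.
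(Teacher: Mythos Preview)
Your proposal is correct and follows essentially the same two-stage structure as the paper: first establish by induction on the \rhle{} derivation that derivability implies the semantic judgment with respect to the approximate semantics, then use Theorems~\ref{thm:ACompat} and~\ref{thm:ECompat} to transfer this to any compatible concrete implementation context. Your write-up supplies considerably more detail than the paper's terse sketch (in particular the case analysis for the \textsc{Step} rules via soundness of the component logics, and the well-founded-induction argument for \textsc{$\exists$While}), but the overall decomposition and the key lemmas invoked are the same.
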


\section{Verification}
\label{sec:Verification}

\begin{wrapfigure}{l}{0.54\textwidth}
  \vspace{-2.1em}
  \begin{minipage}{0.54\textwidth}
\begin{algorithm}[H]
  \SetKwProg{Match}{match}{\string:}{}
  \SetKwInOut{Params}{Inputs}
  \SetKwInOut{Output}{Output}
  \SetKwFunction{WPGen}{VCGen}
  \SetKwFunction{Check}{Verify}
  \DontPrintSemicolon
  \Params{$\Phi$, relational precondition \\
          $p_{\forall}$, universal programs \\
          $p_{\exists}$, existential programs \\
          $\Psi$, relational postcondition}
  \Output{$\rhletripnosigma{\Phi}{p_\forall}{p_\exists}{\Psi}$ validity}
\Begin{
  $\overline{\Psi} \leftarrow (\varnothing, \varnothing, \Psi)$ \;
  $(\overline{a}, \overline{e}, \Psi') \leftarrow$
  \WPGen{$\overline{\sskip{}; p_\forall}, \overline{\sskip{}; p_\exists}, \overline{\Psi}$} \;
  \Return \Check{$\forall \overline{a} \exists \overline{e}.\; \Phi \implies \Psi'$}
}
\caption{RHLEVerify}
\label{alg:RHLEVerify}
\end{algorithm}
\end{minipage}
\vspace{-2em}
\end{wrapfigure}

We now turn to the relational verification algorithm based on \rhle{},
presented in \autoref{alg:RHLEVerify}. The algorithm is implicitly
parameterized over a pair of universal and existential contexts, and
\texttt{Verify}, a decision procedure for checking validity of a
formula in the underlying assertion logic. The bulk of the work is
delegated to \texttt{VCGen}, presented in \autoref{alg:VCGen}, which
builds a weakest relational precondition for the input \rhle{}
triple. The algorithm then checks that the \rhle{} triple's
precondition entails the calculated weakest precondition.

\begin{figure}[t!]
\begin{algorithm}[H]
  \SetKwProg{Match}{match}{\string:}{}
  \SetKwInOut{Params}{Inputs}
  \SetKwInOut{Output}{Output}
  \SetKwFunction{recurse}{VCGen}
  \SetKwFunction{Check}{Verify}
  \SetKwFunction{FindInvariant}{FindInvariant}
  \SetKwFunction{FreeVars}{FreeVars}
  \SetKwFunction{Freshen}{Freshen}
  \DontPrintSemicolon
  \Params{$p_{\forall}$, a set of universal programs \\
          $p_{\exists}$, a set of existential programs \\
          ${\overline{\Psi}} = (Q_\forall, Q_\exists, \Psi)$, $\Psi$ a postcondition
            with quantified variables $Q_\forall$, $Q_\exists$
          }
  \Output{$(\{v_0, \ldots, v_n\}, \{w_0, \ldots, w_n\}, \Phi)$ such that
    $v_i$, $w_i$ free in $\Phi$ and
    $\rhletripnosigma{\Phi}{p_{\forall}}{p_{\exists}}{\Psi}$ is valid if $\forall
    v_0, \ldots, v_n\ \exists w_0, \ldots w_n.\ \Phi \implies \Psi$ }
\Begin{
  \Match{$p_\forall \esim p_\exists$}{
    \Case{$\overline{\sskip} \esim \overline{\sskip}$}{
      \Return{$\overline{\Psi}$}
    }
    \Case{$\overline{p'_\forall} \circledast (s_1;s_2) \circledast
      \overline{p''_\forall} \esim p_\exists$ \KwSty{where} $s_2$ not
      a loop
    }{
      \recurse{$\overline{p'_\forall} \circledast s_1 \circledast
      \overline{p''_\forall}, p_\exists, \texttt{VC}_\forall(s_2, \overline{\Psi})$}
    }
    \Case{$p_\forall \esim \overline{p'_\exists} \circledast (s_1;s_2)
      \circledast \overline{p''_\exists}$ \KwSty{where} $s_2$ not a
      loop
    }{
      \recurse{$p_\forall, \overline{p'_\exists} \circledast s_1
        \circledast \overline{p''_\exists} , \texttt{VC}_\exists(s_2, \overline{\Psi})$}
    }
    \Case{$\overline{p'_\forall} \circledast s_1;\ite{b}{s_t}{s_e} \circledast \overline{p''_\forall}
      \esim p'_\exists$} {
      $(Q_\forall, Q_\exists, \Psi_T) \leftarrow$ \recurse{$\overline{p'_\forall} \circledast
        s_1; s_t \circledast
        \overline{p''_\forall}, p_\exists, b\implies \Psi)$} \;
      $(Q'_\forall, Q'_\exists, \Psi_E) \leftarrow$ \recurse{$\overline{p'_\forall} \circledast
        s_1; s_e \circledast
        \overline{p''_\forall}, p_\exists, \lnot b\implies \Psi)$} \;
        \Return{($Q_\forall \cup Q'_\forall, Q_\exists \cup Q_\exists, \Psi_T \land \Psi_E$)}
    }
\Case{$p_{0};\whilenoend{b_0}{s_0} \circledast
      \cdots \circledast p_{i-1};\whilenoend{b_{i-1}}{s_{i-1}}
          \esim p'_{i};\whilenoend{b_{i}}{s_{i}} \circledast \cdots \circledast p'_{n};\whilenoend{b_n}{s_n}$}{
          $\mathbb{I} \leftarrow $
          \FindInvariant{$\whilenoend{b_0}{s_0} \circledast
      \cdots \circledast \whilenoend{b_{i-1}}{s_{i-1}}
          \esim \whilenoend{b_{i}}{s_{i}} \circledast \cdots \circledast \whilenoend{b_n}{s_n}$} \;
      $(Q'_\forall, Q'_\exists,\Psi_{body}) \leftarrow$
         \recurse{$s_0 \circledast
             \cdots \circledast s_{i-1} \esim s_i \circledast
             \cdots \circledast s_n, \mathbb{I}$} \;
      $inductive \leftarrow \mathbb{I} \land \bigwedge_{0 \le i \le n} b_i \implies \Psi_{body}$ \;
      $lockstep \leftarrow \mathbb{I} \land \neg \bigwedge_{0 \le i \le n} b_i
          \implies \bigwedge_{0 \le i \le n} \neg b_i$ \;
      $post \leftarrow \mathbb{I} \land \bigwedge_{0 \le i \le n} \neg b_i \implies \Psi$ \;
      $(Q_\forall, Q_\exists, \Psi) \leftarrow \overline{\Psi}$ \;
      \If{\Check{$Q_\forall \cup Q'_\forall,\; Q_\exists \cup Q'_\exists,\; inductive \land lockstep \land post$}}{
        \recurse{$\overline{p}, \overline{p'}, (Q_\forall, Q_\exists, \mathbb{I})$}
      } \Else {
        \KwSty{next case}
      }
    }
    \Case{$\overline{p'_\forall} \circledast (s_1;s_2) \circledast
      \overline{p''_\forall} \esim p_\exists$}{
      \recurse{$\overline{p'_\forall} \circledast s_1 \circledast
      \overline{p''_\forall}, p_\exists, vc_\forall(s_2, \overline{\Psi})$}
    }
    \Case{$p_\forall \esim \overline{p'_\exists} \circledast (s_1;s_2)
      \circledast \overline{p''_\exists}$}{
      \recurse{$p_\forall, \overline{p'_\exists} \circledast s_1
        \circledast \overline{p''_\exists} , vc_\exists(s_2, \overline{\Psi})$}
    }
  }
}
\caption{VCGen}
\label{alg:VCGen}
\end{algorithm}
\vspace{-2.5em}
\end{figure}

The body of \texttt{VCGen} builds a formula by recursively generating
verification conditions for the input programs statement by
statement. This loop tries to maximize opportunities to apply
synchronous rules at each step, as these rules allow us to
simultaneously generate proof obligations for multiple subprograms, as
discussed in \autoref{sec:sync+rules}. After establishing there are
still program statements to step over (lines 3--4), \texttt{VCGen}
looks for and processes any trailing program statements which are not
loops (lines 5--8), as such statements are not subject to synchronous
rule applications. To process individual program statements,
\texttt{VCGen} relies on a pair of verification condition generators,
\texttt{VC}$_\forall$ and \texttt{VC}$_\exists$, for the
non-relational program logics. These functions are largely standard
weakest precondition generators extended with support for existential
function calls. The consequents of \textsc{$\forall$Spec} and
\textsc{$\exists$Spec} immediately yield weakest precondition rules,
so that if
$S_{\forall}(f)= \textsf{ax}_{\forall}\; f (\overline{x}) \{P\} \{Q\}$
and
$S_\exists(f) = \textsf{ax}_\exists f(\overline{x})\; [\overline{c}]\;
\{P\}\; \{Q\}$, then:
\begin{align*}
 \texttt{VC}_\forall(\Psi, \asgn{y}{f(\overline{a})}) &=
P [\overline{a}/\overline{x}] \land \forall v. Q [v/\rho;
\overline{a}/\overline{x}] \implies \Psi[v/y] \\
\texttt{VC}_\exists(\Psi, \asgn{y}{f(\overline{a})}) & =
\exists \overline{k}.\;(
   [\overline{x} \mapsto \overline{a}] \models P[\overline{k}/\overline{c}]
           ~ \land ~  \exists v.
                          [\rho \mapsto v, \overline{x} \mapsto
                          \overline{a}] \models
                                                              Q[\overline{k}/\overline{c}]
  \\
  & \phantom{\quad \exists \overline{k}.~~}
            ~\land~  \forall v.
                        [\rho \mapsto v, \overline{x} \mapsto
                        \overline{a}] \models
                        Q[\overline{k}/\overline{c}]
\implies
                        \Psi[v/y]
                        )
\end{align*}
\noindent If the first three cases fail, the final statements of all
the remaining programs are loops. In this case, \texttt{VCGen}
attempts to simultaneously process the loops (lines 9--19) \`{a} la
the \textsc{SyncLoops} rule in \autoref{ex:SyncLoops}. To be
eligible for fusion, loops must execute in lockstep.  This condition
is checked (line 16) before returning; if loops may execute different
numbers of times, the algorithm proceeds to the next match case. If no
synchronized reasoning is possible, \texttt{VCGen} defaults to
stepping over an arbitrary loop in one of the programs (lines 20--23).

\texttt{VCGen} is parameterized over a procedure called
\texttt{FindInvariant}, which acts as an oracle for relational loop
invariants. Our prototype implementation of \autoref{alg:RHLEVerify}
currently requires loops to be annotated with their invariants; these
annotations are used to implement \texttt{FindInvariant}. We have
experimented with adapting both purely
logical~\cite{Houdini,DilligAbductiveInv} and data-driven
approaches~\cite{padhi2016data,LoopInvGen} for invariant inference,
but have yet to discover one that is effective for our larger
benchmarks. Unlike traditional loop invariants, which must be
re-established on every possible execution of the loop body,
invariants in existentially quantified executions need only be
re-established on a subset of the possible executions of the body. A
robust invariant inference approach thus requires finding not only the
invariant itself, but a strategy for instantiating choice variables
that consistently re-establish the chosen invariant. Scalable
invariant inference for existentially quantified executions is an
important and interesting direction for future work.

\apref{sec:VerificationExample} includes an example application of
\autoref{alg:RHLEVerify} to \texttt{RandB}.

\section{Implementation and Evaluation}
\label{sec:Evaluation}
\vspace{-.7em} To evaluate our approach, we have implemented \orhle{},
a publicly available~\cite{OrhleZenodo} automatic program verifier
based on \autoref{alg:RHLEVerify}. \orhle{} is implemented in Haskell,
and uses Z3 as a backend solver to fill the role of \texttt{Verify}.
As previously mentioned, invariants are provided by the programmer via
annotations in the code. Input to \orhle{} consists of a collection of
\FunImp{} programs, a declaration of how many copies of each program
should be included in the universal and existential contexts, and a
collection of function specifications expressed using the SMT-LIB2
format. Functions can have both universal and existential
specifications, with the latter containing declarations of choice
variables. \apref{sec:orhleInput} has example \orhle{} input listings.
\orhle{} outputs a set of verification conditions along with a success
or failure message. When a property fails to verify, \orhle{} outputs
a falsifying model.

Our evaluation addresses the following questions:

\begin{enumerate}
\item[(R1)] Is \rhle{} \emph{expressive} enough to represent a variety
  of interesting properties?
\item[(R2)] Is our approach \emph{effective}; that is, can it be used
  to verify or invalidate relational assertions about a diverse corpus
  of programs?
\item[(R3)] Is it possible to realize an \emph{efficient} implementation
  of our verification approach which returns results within a reasonable
  time frame?
\end{enumerate}

To answer these questions, we have developed a suite of $41$ programs
over $5$ kinds of relational specifications drawn from the literature.
We have also compiled an additional set of $12$ benchmarks over two
non-relational existentially quantified properties in order to
evaluate similar questions about the non-relational existential logic
from \autoref{sec:EHL}. Both sets of benchmarks contain a mix of valid
and invalid properties. We have made these benchmarks publicly
available\footnote{Branching time property benchmarks are adapted from
a proprietary source, and are thus omitted from the publicly available
benchmarks.} via GitHub~\cite{OrhleBenchmarks}.

Our benchmarks for the non-relational existential logic from
\autoref{sec:EHL} fall into two categories:

\parheading{Winning Strategy} Programs in this category play a
simplified version of the card game twenty-one. Players start with two
cards valued between 1 and 10, and can then request any number of
additional cards. The goal is to get a hand value as close to 21 as
possible without going over. The property of interest is whether an
algorithmic strategy for this game permits the \emph{possibility} of
achieving the maximum hand value of 21 given any starting hand.

\parheading{Branching Time Properties} Our next set of benchmarks are
taken from work by Cook and Koskinen~\cite{Cook2013} which considered
verification of properties of single programs expressed in CTL. The
programs in this category are adaptations of the subset of those
benchmarks which assert the existence of desirable final states and
are thus expressible in \rhle{}.

Our set of relational benchmarks cover program refinement in addition
to:

\parheading{Noninterference} Generalized noninterference is a
possibilistic information security property which ensures that
programs do not leak knowledge about high-security state via
low-security outputs. Our formalization of this property is based on
Mclean~\cite{Mclean1996} and requires that, for any execution of a program
$p$ whose state is divided into high security $p_H$ and low security
$p_L$ partitions, any other starting state with the same initial low
partition can potentially yield the same final low partition,
regardless of the high partition.

\parheading{Delimited Release} Delimited release is a relaxation of
generalized noninterference which allows for limited information about
secure state to be released. For example, given a confidential list of
employee salaries, it may be acceptable to publicize the average
salary as long as no other salary information is leaked. We formulate
delimited release as a noninterference property with an additional
condition requiring that the programs agree on the values of the
released information. For the previous example, we would add a
precondition asserting the average salary across all executions is
equal.

\parheading{Parameter Usage} Our parameter usage benchmarks check
whether a function parameter is semantically unused, in that the
existence of the parameter does not affect the program's reachable
final states. For example, the \texttt{flag} parameter in
\texttt{f(flag) = if flag then return 1 else return 1} is
syntactically used in \texttt{f}, even affecting its control flow, but
does not have any effect on \texttt{f}'s possible outputs; we
therefore consider \texttt{flag} to be semantically unused. For an
n-ary function $f(p_1, \ldots, p_n)$, we say parameter $p_i$ is
semantically unused if
\[\rhletripnosigma{v_i \neq w_i \wedge
\bigwedge_{j \neq i} v_j = w_j}{\asgn{a}{f(v_1, \ldots,
v_n)}}{\asgn{b}{f(w_1, \ldots, w_n)}}{a = b}
\]

\parheading{Flaky Tests} Tests of program behavior which can
nondeterministically pass or fail pose a significant hazard as they
can trigger false alarms or allow regressions to go undetected. We
modeled representative nondeterministic tests in \FunImp{} based on
examples from The Illinois Dataset of Flaky Tests
(IDoFT)\cite{Shi2016,Lam2019}, framing flakiness as a \AEH{} property
containing only existential executions. We consider a test verifiably
flaky when there exists both a test execution that succeeds and one
that fails. We model nondeterminsitic system behavior (e.g.,
\texttt{getCurrentTimeMs()} or the results of network calls) as
function calls. For example, to model the imprecision of thread
\texttt{sleep}s, we give the verifier leeway to sleep within a
$\pm 20$ ms window around the requested interval:
\textsf{ax}$_\exists$ \textsf{sleep}(interval, currentTime)
[sleepTime]
\{$0 \le \textrm{sleepTime} \;\land\; \textrm{interval} - 20 \le
\textrm{sleepTime} \le \textrm{interval} + 20$\}
\{$\rho = \textrm{currentTime} + \textrm{sleepTime}$\}.

\begin{figure}[t]
\footnotesize
\begin{center}
\setlength{\tabcolsep}{3pt}
\begin{tabular}{lcccccc}
\textbf{Property} & \textbf{Shape} & \textbf{Pos} & \textbf{Neg}
  & \textbf{Unk} & \textbf{Med}(ms) & \textbf{Max}(ms) \\
  \hline
Delimited Release & $\forall p_1 \exists p_2$ & 7 & 6 & 0 & 222 & 253 \\
Flaky Tests & $\exists p_1 p_2$ & 2 & 0 & 0 & 231 & 245 \\
Generalized Noninterference & $\forall p_1 \exists p_2$ & 4 & 6 & 0 & 222 & 229 \\
Parameter Usage & $\forall p_1 \exists p_2$ & 4 & 3 & 0 & 220 & 245 \\
Program Refinement & $\forall p_1 \exists p_2$ & 4 & 4 & 1 & 224 & 1367 \\
\hhline{=======}
Winning Strategy & $\exists p$ & 1 & 2 & 0 & 228 & 230 \\
Branching Time & $\exists p$ & 7 & 2 & 0 & 226 & 259 \\
\end{tabular}
\end{center}
\vspace{-1em}
\caption{\orhle{} verification results over a set of relational and
non-relational properties. The \textbf{Shape} column gives the
execution quantification pattern for the property; each property is of
the form $\forall p_0 \ldots p_n \exists q_o \ldots q_n$, where
$p_i$'s and $q_i$'s are (possibly empty) sets of executions. The
\textbf{Pos} and \textbf{Neg} columns give the number of benchmarks
over which the property holds or does not hold, respectively. The
\textbf{Unk} column gives the number of benchmarks whose verification
conditions could not be decided by the SMT solver. The \textbf{Med}
and \textbf{Max} columns give (respectively) the median and maximum
verification times in milliseconds over each set of benchmarks. }
\label{fig:eval}
\vspace{-1em}
\end{figure}

The variety of properties we were able to represent in \orhle{}
provides evidence that it is sufficiently expressive (R1). To show
that \orhle{} is both effective and efficient (R2)-(R3), we have used
it to verify and/or invalidate examples of the benchmark properties
described above. All of these experiments were done using an Intel
Core i7-6700K CPU with 8 4GHz cores. \autoref{fig:eval} presents the
results of these experiments. \orhle{} yielded the expected
verification result in all cases except for one refinement benchmark,
where the backing SMT solver (Z3) was unable to determine the validity
of the verification conditions. While most benchmarks' verification
conditions fell within the theory of linear integer arithmetic,
verification conditions fell in a non-decidable fragment of arithmetic
in this benchmark. This undecidable instance accounts for the outlier
maximum verification time in the refinement benchmarks. Overall, these
results offer evidence that \orhle{} is both effective and efficient
for verifying a variety of existential and \AEH{} properties.

\section{Related Work}
\label{sec:RelatedWork}
\vspace{-.7em}
\noindent\textit{Relational Program Logics}
Relational program logics are a common approach to verifying
relational specifications. Relational Hoare Logic~\cite{Benton+RHL}
(RHL) was one of the first examples of these logics, and is capable of
proving 2-safety properties. Relational Higher-order
Logic~\cite{Aguirre+RHOL} is a higher-order relational logic for
reasoning about higher-order functional programs expressed in a
simply-typed $\lambda$-calculus. Probabilistic RHL~\cite{Barthe+pRHL}
is a logic for reasoning about probabilistic programs in order to
prove security properties of cryptographic schemes. The relational
logic closest to \rhle{} is Cartesian Hoare Logic~\cite{Sousa+CHL}
(CHL) developed by Sousa and Dillig. This logic which provides an
axiomatic system for reasoning about $k$-safety hyperproperties along
with an automatic verification algorithm. \rhle{} can be thought of as
an extension of CHL for reasoning about the more general class of
\AEH{} properties. Nagasamudram and Naumann~\cite{Nagasamudram2021}
examine \textit{alignment completeness} for relational Hoare logics,
which classifies the ability of these logics to reason about programs
in lockstep. Banerjee et al.~\cite{Banerjee2019} introduce a relational
Hoare logic capable of reasoning about encapsulation and invariant
hiding, but which is confined to 2-safety properties.

\noindent\textit{Underapproximate Program Logics}
Several program logics have been proposed to reason about the
existence of particular executions of a single program, similar to the
non-relational existential logic presented in \autoref{sec:HLE}.
Reverse Hoare Logic~\cite{Vries+ReverseHL} is a program logic for
reasoning about reachability over single executions of programs which
have access to a nondeterministic binary choice ($\sqcup$) operator.
Incorrectness Logic~\cite{OHearn+Incorrectness} is a recent adaptation
of Reverse Hoare Logic to a more realistic programming language. While
these logics express the existence of a satisfying start state for all
satisfying end states ($\forall \sigma' \exists \sigma$), the
existential logic presented in \autoref{sec:HLE} requires there to
exist a satisfying end state for all satisfying start states
($\forall \sigma \exists \sigma'$). Reverse Hoare Logic and
Incorrectness Logic both reason about reachability over single
executions, but properties in these logics are pure
underapproximations: every state in a given postcondition must be
reachable. In contrast, our reasoning over existential specifications
is underapproximate \emph{with respect to the choice variables only}.
While every valid choice value must correspond to a reachable set of
final states, each of these sets are still overapproximate. This
feature of our existential specifications enables a natural
integration with standard Hoare logics.

First-order dynamic logic~\cite{Pratt1976} is a reinterpretation of
Hoare logic in first-order, multi-modal logic. For a program $p$, the
modal operators $[p]$ and $\langle p \rangle$ capture universal and
existential quantification over program executions. Our universal
Hoare triple $\vdash \{P\} p \{Q\}$ corresponds to $P \implies [p]Q$,
and our existential Hoare triple $\vdash [P] p [Q]_{\exists}$
corresponds to $P \implies \langle p \rangle Q$. In contrast to
\rhle{}, dynamic logic reasons about properties of single program
executions.

\noindent\textit{Prophecy Variables}
Prophecy variables were originally introduced by Abadi and Lamport
\cite{Abadi+Refinement} in order to establish refinement mappings
between state machines. Choice variables in our existential
specifications are similar to prophecy variables in that they capture
the required value of some ``future'' state, although we use them as
part of a program logic rather than to reason about refinement
mappings between state machines. Jung et al. \cite{Jung+Future}
incorporate prophecy variables into a separation Hoare logic to reason
about nondeterminism in concurrent programs, but differ from our
approach in that the program logic operates in a non-relational
setting and is designed for interactive and not automated
verification.

\noindent\textit{Relational Verification}
The concept of a hyperproperty was originally introduced by Clarkson
and Schneider~\cite{Clarkson+HYP}, building on earlier work by
Terauchi and Aiken~\cite{Terauchi2005}. The initial work discusses
verification but it does not offer an algorithm; numerous program
techniques have been subsequently proposed to verify hyperproperties.
Product programs are an alternative approach to relational
verification~\cite{Barthe+ProdVerification}. This approach can
leverage existing non-relational verification tools and techniques
when verifying the product program, but the large state space of
product programs can make verification difficult in practice. Product
programs have been used to verify $k$-safety properties and reason
about noninterference and secure information
flow~\cite{Barthe+SecSelfComp,Kovacs+CFGProd}. Barthe et
al.~\cite{barthe2013beyond} have developed a set of necessary
conditions for ``left-product programs''; these product programs can
be used to verify hyperproperties outside of $k$-safety, including our
$\forall\exists$ properties, although the work does not address how to
construct left-product programs.

Unno et al.~\cite{Unno2021} have developed a technique for verifying
\AEH{} properties including program refinement, generalized
noninterference, and cotermination by encoding a constraint
satisfaction problem expressed using a generalization of constrained
Horn clauses. The approach solves constraints using a stratified CEGIS
approach, and can synthesize non-trivial alignment predicates for
interleaving executions of loop bodies. This work is not based on a
Hoare-style program logic, but rather develops per-property embeddings
of \AEH{} verification problems in a novel adaptation of constrained
Horn clauses.

There are several modal logics which support a style of existential
reasoning similar to our existential logic. Temporal logics like
HyperLTL and HyperCTL~\cite{Clarkson2014} can be used to reason about
hyperproperties, although verification tooling~\cite{Clarke1994} is
focused on model checking state transition systems rather than program
logics. Coenen et al.~\cite{Coenen2019} examine verification and
synthesis of computational models using HyperLTL formulas with
alternating quantifiers. Cook et al.~\cite{Cook2013} examine
existential reasoning in branching-time temporal logics by way of
removing state space until universal reasoning methods can be used.
Lamport and Schneider~\cite{Lamport2021} examine using TLA to verify
\AEH{} properties including refinement and GNI. While the above
approaches are capable of reasoning about the kinds of liveness
properties we consider in this paper, they all focus on model checking
state transition systems rather than using a Hoare-style logic to
reason directly over programs as in our approach.

\section{Conclusion}
\label{sec:Conclusion}
\vspace{-.7em} This paper presented \rhle{}, a novel relational
Hoare-style program logic for reasoning about \AEH{} properties. These
properties can capture a variety of interesting behaviors of multiple
program executions, including program refinement and information flow
properties. Key to our logic is a novel form of function
specifications which constrain the set of behaviors that a valid
implementation of a function \emph{must} exhibit. We have developed an
automated verification algorithm based on \rhle{}, and we demonstrated
that an implementation of this algorithm is able to check the validity
of a variety of \AEH{} properties over a benchmark suite of programs.

\section*{Acknowledgements}

We would like to thank Roopsha Samanta for her valuable input on
initial drafts of this work. We would also like to thank the anonymous
reviewers of this and previous iterations of this paper for their
much-appreciated feedback. This research was partially supported by
the National Science Foundation under Grant CCF-1755880 and by a grant
from the Purdue Research Foundation.

\def\bibfont{\normalsize}
\bibliography{bibliography}

\newpage
\appendix
\section{Semantics of \FunImp{}}
\label{sec:FunImpSemantics}

The semantics of \FunImp{} is given as a big-step reduction relation
from initial to final states. This relation is parameterized over an
\emph{implementation} context $I \in \mathcal{N} \rightarrow
\mathit{FD}$, a partial mapping from function names to definitions.
\FunImp{} program states, $\sigma \in \mathcal{V} \rightarrow
\mathbb{N}$, are mappings from variables to their current value. The
reduction relation is also parameterized over an interpretation used
to determine the validity of assertions; we write $\sigma \models P$
to denote that the assertion $P$ holds in state $\sigma$. We condense
sequences of repeated expressions in a similar way to function
arguments, writing the sequence
$I \vdash \sigma, a_1 \Downarrow v_1 \; \cdots \;
    I \vdash \sigma, a_n \Downarrow v_n$ as
$I \vdash \sigma, \overline{a} \Downarrow \overline{v}$ and
$[ x_1 \mapsto v_1, \ldots, x_n \mapsto v_n]$ as
$[ \overline{x} \mapsto \overline{v}]$, for example.

The evaluation rules of \FunImp{} are presented in
\autoref{fig:FunSemantics}.

\begin{figure}
  \begin{mathpar}
    \footnotesize
    \inferrule*[right=ESkip]
    {
    }
    {
      I \vdash \sigma, \sskip \Downarrow \sigma
    }

    \inferrule*[right=EAssgn]
    {
      \sigma, a \Downarrow v
    }
    {
      I \vdash \sigma, \asgn{x}{a} \Downarrow [x \mapsto v]\sigma
    }

    \inferrule*[right=EHavoc]
    {
    }
    {
      I \vdash \sigma, x ::= {\textsf{havoc}} \Downarrow [x \mapsto v]\sigma
    }

    \inferrule*[right=ESeq]
    {
      I \vdash \sigma, c_1 \Downarrow \sigma' \\
      I \vdash \sigma', c_1 \Downarrow \sigma''
    }
    {
      I \vdash \sigma, \seq{c_1}{c_2} \Downarrow \sigma''
    }

    \inferrule*[right=ECondT]
    {
      \sigma, b \Downarrow \textsf{true} \\
      I \vdash \sigma, c_1 \Downarrow \sigma'
    }
    {
      I \vdash \sigma,\ite{b}{c_1}{c_2} \Downarrow \sigma'
    }

    \inferrule*[right=ECondF]
    {
      \sigma, b \Downarrow \mathtt{\bot} \\
      I \vdash \sigma, c_2 \Downarrow \sigma'
    }
    {
      I \vdash \sigma,\ite{b}{c_1}{c_2} \Downarrow \sigma'
    }

    \inferrule*[right=EWhileT]
    {
      \sigma, b \Downarrow \textsf{true} \\
      I \vdash \sigma, c \Downarrow \sigma' \\\\
      I \vdash \sigma', \while{b}{c} \Downarrow \sigma''
    }
    {
      I \vdash \sigma, \while{b}{c} \Downarrow \sigma''
    }

    \inferrule*[right=EWhileF]
    {
      \sigma, b \Downarrow \textsf{false} \\
    }
    {
      I \vdash \sigma, \while{b}{c} \Downarrow \sigma
    }

  \inferrule*[right=ECall] {
    I(f) = \textsf{\textbf{def} } f(\overline{x})\, \{s; \textsf{\textbf{return} } e \} \\
    I \vdash \sigma, \overline{a} \Downarrow \overline{v} \\
    I \vdash [ \overline{x} \mapsto \overline{v}], s \Downarrow \sigma' \\
    I \vdash \sigma', e \Downarrow r \\
  }
  {
    I \vdash \sigma, y := f(\overline{a}) \Downarrow
    [y \mapsto r]\sigma
  }
  \end{mathpar}
  \caption{Big-step evaluation relation of \FunImp{} with a concrete
  implementation context.}
  \label{fig:FunSemantics}
\end{figure}

\subsection{Overapproximate Executions Semantics}

The big-step operational semantics for overapproximate evaluation is
given in \autoref{fig:UniversalExecSemantics}. These semantics are
nearly identical to the evaluation semantics over concrete
implementation contexts given in \autoref{fig:FunSemantics}, but are
instead parameterized over a universal specification context
$S_\forall \in \mathcal{N} \rightarrow FA$ and replaces the
\textsc{ECall} rule with the \textsc{ECall}$_\forall$ rule. The latter
rule allows a call to a universally specified function to step to any
state with a return value consistent with the function's
specification.

\begin{figure}
  \begin{mathpar}
    \footnotesize
    \inferrule*[right=ESkip$_\forall$]
    {
    }
    {
      S_\forall \vdash \sigma, \sskip \Downarrow_\forall \sigma
    }

    \inferrule*[right=EAssgn$_\forall$]
    {
      \sigma, a \Downarrow_\forall v
    }
    {
      S_\forall \vdash \sigma, \asgn{x}{a} \Downarrow_\forall [x \mapsto v]\sigma
    }

    \inferrule*[right=EHavoc$_\forall$]
        {  }
        {
          S_\forall \vdash \sigma, x:=\textsf{havoc} \Downarrow_\forall [x \mapsto v]\sigma
        }

    \inferrule*[right=ESeq$_\forall$]
    {
      S_\forall \vdash \sigma, c_1 \Downarrow_\forall \sigma' \\
      S_\forall \vdash \sigma', c_1 \Downarrow_\forall \sigma''
    }
    {
      S_\forall \vdash \sigma, \seq{c_1}{c_2} \Downarrow_\forall \sigma''
    }

    \inferrule*[right=ECondT$_\forall$]
    {
      \sigma, b \Downarrow_\forall \textsf{true} \\
      S_\forall \vdash \sigma, c_1 \Downarrow_\forall \sigma'
    }
    {
      S_\forall \vdash \sigma,\ite{b}{c_1}{c_2} \Downarrow_\forall \sigma'
    }

    \inferrule*[right=ECondF$_\forall$]
    {
      \sigma, b \Downarrow_\forall \mathtt{\bot} \\
      S_\forall \vdash \sigma, c_2 \Downarrow_\forall \sigma'
    }
    {
      S_\forall \vdash \sigma,\ite{b}{c_1}{c_2} \Downarrow_\forall \sigma'
    }

    \inferrule*[right=EWhileT$_\forall$]
    {
      \sigma, b \Downarrow_\forall \textsf{true} \\
      S_\forall \vdash \sigma, c \Downarrow_\forall \sigma' \\\\
      S_\forall \vdash \sigma', \while{b}{c} \Downarrow_\forall \sigma''
    }
    {
      S_\forall \vdash \sigma, \while{b}{c} \Downarrow_\forall \sigma''
    }

    \inferrule*[right=EWhileF$_\forall$]
    {
      \sigma, b \Downarrow_\forall \textsf{false} \\
    }
    {
      S_\forall \vdash \sigma, \while{b}{c} \Downarrow_\forall \sigma
    }

  \inferrule*[right=ECall$_{\forall 1}$]
  {
    S_{\forall}(f) = \textsf{ax}_{\forall}\; f(\overline{x})\, \{P\}\, \{Q\} \\
    S_{\forall} \vdash \sigma, \overline{a} \Downarrow_\forall \overline{v} \\
    [ \overline{x} \mapsto \overline{v}] \models P  \\
    [ \rho \mapsto r, \overline{x} \mapsto \overline{v}] \models Q  \\
  }
  {
    S_{\forall} \vdash \sigma, y := f(\overline{a}) \Downarrow_\forall [y \mapsto r]\sigma
  }

  \inferrule*[right=ECall$_{\forall 2}$]
  {
    S_{\forall}(f) = \textsf{ax}_{\forall}\; f(\overline{x})\, \{P\}\, \{Q\} \\
    S_{\forall} \vdash \sigma, \overline{a} \Downarrow_\forall \overline{v} \\
    [ \overline{x} \mapsto \overline{v}] \not \models P  \\
  }
  {
    S_{\forall} \vdash \sigma, y := f(\overline{a}) \Downarrow_\forall [y \mapsto r]\sigma
  }
  \end{mathpar}
  \caption{Overapproximate execution semantics of \FunImp{} with a
  universal specification context.}
  \label{fig:UniversalExecSemantics}
\end{figure}

\section{Universal Hoare Logic}
\label{sec:HL}
\begin{figure}[H]
  \begin{mathpar}
    \small
    \inferrule*[right=$\forall$Conseq]
    {
      \models P \implies P' \\
      \models Q' \implies Q \\
      \hltrip{P'}{c}{Q'}
    }
    {
      \hltrip{P}{c}{Q}
    }

    \inferrule*[right=$\forall$Skip]
    {
    }
    {
      \hltrip{P}{\sskip}{P}
    }

    \inferrule*[right=$\forall$Assgn]
    {
    }
    {
      \hltrip{P[a/x]}{\asgn{x}{a}}{P}
    }

    \inferrule*[right=$\forall$Havoc]
    {
    }
    {
      \hltrip{\forall v. P[v/x]}{x ::= \textsf{havoc}}{P}
    }

    \inferrule*[right=$\forall$Seq]
    {
      \hltrip{P}{c_1}{P'} \\
      \hltrip{P'}{c_2}{Q}
    }
    {
      \hltrip{P}{\seq{c_1}{c_2}}{Q}
    }

    \inferrule*[right=$\forall$Cond]
    {
      \hltrip{P \land b}{c_1}{Q} \\
      \hltrip{P \land \neg b}{c_2}{Q}
    }
    {
      \hltrip{P}{\ite{b}{c_1}{c_2}}{Q}
    }

    \inferrule*[right=$\forall$While]
    {
      \hltrip{P\; \land\; b}{c}{P}
    }
    {
      \hltrip{P}{\while{b}{c}}{P\; \land\; \neg b}
    }

    \inferrule*[right=$\forall$Spec]
    {
      S(f)= \textsf{ax } f (\overline{x}) \{P\} \{Q\}
    }
    {
      \hltrip{
        \begin{minipage}{.34\columnwidth}
          \begin{align*}
            P [\overline{a}/\overline{x}]
            & \land~~ \forall v. Q [v/\rho; \overline{a}/\overline{x}]
            & \implies Q[v/y]
          \end{align*}
        \end{minipage}
      }
      {\asgn{y}{f(\overline{a})}}{Q}
    }
  \end{mathpar}
  \caption{Proof rules for a universal Hoare logic for \FunImp{}.}
  \label{fig:HLRules}
\end{figure}

\newpage

\section{Existential Hoare Logic}
\label{sec:FullExistentialRules}

\begin{figure}[H]
  \begin{mathpar}
    \footnotesize
    \inferrule*[right=$\exists$Conseq]
    {
      \models P \implies P' \\
      \models Q' \implies Q \\
      \hletrip{P'}{s}{Q'}
    }
    {
      \hletrip{P}{s}{Q}
    }

    \inferrule*[right=$\exists$Skip]
    {
    }
    {
      \hletrip{P}{\sskip}{P}
    }

    \inferrule*[right=$\exists$Assgn]
    {
    }
    {
      \hletrip{Q[a/x]}{\asgn{x}{a}}{Q}
    }

    \inferrule*[right=$\exists$Havoc]
    {
    }
    {
      \hletrip{\exists v.\; Q[v/x]}{\asgn{x}{\textsf{havoc}}}{Q}
    }

    \inferrule*[right=$\exists$Seq]
    {
      \hletrip{P}{s_1}{P'} \\
      \hletrip{P'}{s_2}{Q}
    }
    {
      \hletrip{P}{\seq{s_1}{s_2}}{Q}
    }

    \inferrule*[right=$\exists$Cond]
    {
      \hletrip{P \land b}{s_1}{Q} \\
      \hletrip{P \land \neg b}{s_2}{Q}
    }
    {
      \hletrip{P}{\ite{b}{s_1}{s_2}}{Q}
    }

    \inferrule*[right=$\exists$While]
    { R \textsf{ is well-founded} \\
      \hletrip{P\; \land\; b\; \land\; M\, a }
      {s}
      {P \;\land\; \exists a'.\: M\, a' \;\land\; a'\, R\, a}
    }
    {
      \hletrip{P \land \exists a.\: M\, a}{\while{b}{s}}{P\; \land\; \neg b}
    }

    \inferrule*[right=$\exists$Spec]
    {
      S_\exists(f) = \textsf{ax}_\exists f(\overline{x})\; [\overline{c}]\; \{P\}\; \{Q\}
    }
    {
      \hletrip{
        \begin{minipage}{.34\columnwidth}
          \begin{align*}
            \exists \overline{k}.
            \;(
            &[\overline{x} \mapsto \overline{a}] \models P[\overline{k}/\overline{c}] \\
            ~~ \land ~~ & \exists v.
                          [\rho \mapsto v, \overline{x} \mapsto
                          \overline{a}] \models Q[\overline{k}/\overline{c}]
            \\
            ~~\land~~ &  \forall v.
                        [\rho \mapsto v, \overline{x} \mapsto
                        \overline{a}] \models
                        Q[\overline{k}/\overline{c}] \\
            & \quad\quad \implies
                        R[v/y]
                        )
          \end{align*}
        \end{minipage}
      }
      {\asgn{y}{f(\overline{a})}}{R}
    }

  \end{mathpar}
  \caption{Existential Hoare logic rules.}
  \label{fig:EHLRules}
\end{figure}

\newpage

\section{Synchronous Rules}
\label{sec:SynchronousRules}

\begin{figure}[H]
  \begin{mathpar}
    \small


    \inferrule*[right=SkipIntroL]
    {
      \rhletrip{\Phi}
      {\overline{\seq{\sskip}{p_\forall}}}
      {\overline{\seq{\sskip}{p_\exists}}}
      {\Psi}
    }
    {
      \rhletrip{\Phi}
      {\overline{p_\forall}}
      {\overline{p_\exists}}
      {\Psi}
    }

    \inferrule*[right=SyncSeq]
    {
      S_\forall, S_\exists \vdash \epre{\Phi}
      \overline{s_\forall} \esim
      \overline{s_\exists}
      \epost{\chi} \\
      S_\forall, S_\exists \vdash \epre{\chi}
      \overline{s_\forall'} \esim
      \overline{s_\exists'}
      \epost{\Psi}
    }
    {
      S_\forall, S_\exists \vdash \epre{\Phi}
      \overline{\seq{s_\forall}{s_\forall'}} \esim
      \overline{\seq{s_\exists}{s_\exists'}}
      \epost{\Psi}
    }

    \inferrule*[right=SyncLoops]
    {
      \rhletrip{\mathbb{I} \land \bigwedge_{0 \le i \le n} b_i}
               {s_0 \circledast \cdots \circledast s_k}{s_{k+1}
                 \circledast \cdots \circledast s_n}
               {\mathbb{I}} \\
      \mathbb{I} \land \bigwedge_{0 \le i \le n} \neg b_i \implies \Psi \\
      \mathbb{I} \land \neg \bigwedge_{0 \le i \le n} b_i \implies \bigwedge_{0 \le i \le n} \neg b_i
    }
    {
      S_\forall, S_\exists \vdash \epre{\mathbb{I}}
      \while{b_0}{s_0} \circledast \cdots \circledast \while{b_k}{s_k} \\
      \phantom{S_\forall, S_\exists \vdash} \esim
      \while{b_{k+1}}{s_{k+1}} \circledast \cdots \circledast \while{b_n}{s_n}
      \epost{\Psi}
    }

    \inferrule*[right=SyncLoops$_\exists$]
    {
      R \textsf{ is well-founded} \\
      \rhletrip{\mathbb{I} \land \bigwedge_{0 \le i \le n} b_i \land M\; a}
               {\overline{\sskip}}{s_0 \circledast \cdots \circledast s_n}
               {\mathbb{I} \land \exists a'.\: M\, a' \;\land\; a'\, R\, a} \\
               \mathbb{I} \land \bigwedge_{0 \le i \le n} \neg b_i \implies \Psi \\
               \mathbb{I} \land \neg \bigwedge_{0 \le i \le n} b_i \implies \bigwedge_{0 \le i \le n} \neg b_i
    }
    {
      S_\forall, S_\exists \vdash \epre{\mathbb{I} \land \exists a.\:
        M\, a }
      \overline{\sskip} \esim
      \while{b_0}{s_0} \circledast \cdots \circledast \while{b_n}{s_n}
      \epost{\Psi}
    }

    \inferrule*[right=SyncLoops$_\forall$]
    {
      \rhletrip{\mathbb{I} \land \bigwedge_{0 \le i \le n} b_i}
      {s_0 \circledast \cdots \circledast s_n}{\overline{\sskip}}
      {\mathbb{I}} \\
      \mathbb{I} \land \bigwedge_{0 \le i \le n} \neg b_i \implies \Psi \\
      \mathbb{I} \land \neg \bigwedge_{0 \le i \le n} b_i \implies \bigwedge_{0 \le i \le n} \neg b_i
    }
    {
      S_\forall, S_\exists \vdash \epre{\mathbb{I} \land \exists a.\:
        M\, a }
      \while{b_0}{s_0} \circledast \cdots \circledast
      \while{b_n}{s_n} \esim \overline{\sskip}
      \epost{\Psi}
    }

  \end{mathpar}
  \caption{Synchronous \rhle{} proof rules.}
  \label{fig:SyncRHLERules}
\end{figure}

\section{Proofs}
\label{sec:Metatheory}
\begin{theorem}
  \label{thm:ACompatFull}
  When run under an implementation context $I$ that is
  $\forall$-compatible with specification context $S_{\forall}$ with
  an initial state $\sigma$, a program $p$ will either diverge or
  evaluate to a state $\sigma'$ which is also the result of one of its
  overapproximate executions under $S_{\forall}$:
  \begin{align*}
    I \models_\forall S_\forall
    ~~\land~~ & I \vdash \sigma, p \Downarrow \sigma' ~~\implies~~
                S_\forall \vdash \sigma, p \Downarrow_\forall \sigma'
  \end{align*}
\end{theorem}
\begin{proof}
  By induction over the derivation of
  $I \vdash \sigma, p \Downarrow \sigma'$. The only interesting case
  is \textsc{ECall}, where we must consider whether $\sigma$ meets
  the precondition of $f$ in $S_\forall$. If not, the proof is
  immediate from \textsc{ECall$_{\forall 2}$}. If so, the proof
  follows from the fact that the definition of $f$ in $I$ is
  compatible with its specification in $S_\forall$ and
  \textsc{ECall$_{\forall 1}$}.
\end{proof}

\begin{theorem}
  \label{thm:ECompatFull}
  If there is an underapproximate evaluation of program $p$ to a set
  of states $\Sigma$ from an initial state $\sigma$ under
  $S_{\exists}$, then $p$ must terminate in at least one final state
  $\sigma' \in \Sigma$ from $\sigma$ under every implementation
  context $I$ that is $\exists$-compatible with $S_{\exists}$:
  \begin{align*}
    S_\exists \vdash \sigma, p \productive \Sigma
    ~~\land~~ I \models_\exists S_\exists ~~\implies~~
    \exists \sigma'. I \vdash \sigma, p \Downarrow \sigma' \;\land\;
    \sigma' \in \Sigma
  \end{align*}
\end{theorem}
\begin{proof}
  By induction over the derivation of
  $S_\exists \vdash \sigma, p \productive \Sigma$. Once again, the
  interesting case is \textsc{ECall$_\exists$}, which follows
  immediately from the fact that $I$ is $\exists$-compatible with $S_\exists$.
\end{proof}

\begin{theorem}[RHLE is Sound]
  Suppose $\rhletrip{\Phi}{\overline{p_\forall}}{\overline{p_\exists}}{\Psi}$. Then,
  for any function context $I$ compatible with $S_\forall$ and
  $S_\exists$, any set of initial states $\overline{\sigma_\forall}$
  and $\overline{\sigma_\exists}$ satisfying $\Phi$, and every
  collection of final states $\overline{\sigma_\forall'}$ of
  $\overline{p_\forall}$, there must exist a collection of final
  states produced by $\overline{p_\exists}$ that, together with
  $\overline{\sigma_\forall'}$ satisfies the relational post-condition
  $\Psi$:
  \begin{align*}
  & \rhletrip{\Phi}{\overline{p_\forall}}{\overline{p_\exists}}{\Psi}
    \\
    \land & \forall I.\, I \models S_\forall \land I \models S_\exists \\
    \land & \forall \overline{\sigma_\forall}\, \overline{\sigma_\exists}.\:
            \overline{\sigma_\forall}, \overline{\sigma_\exists} \models \Phi
    \\
    \land & \forall \overline{\sigma_\forall'}.\:
            I \vdash \overline{\sigma_\forall}, \overline{p_\forall} ~\Downarrow~ \overline{\sigma_\forall'}
            \implies \\
    & \exists \overline{\sigma_\exists'}.\:
      I \vdash \sigma_\exists, \overline{p_\exists} ~\Downarrow~
      \overline{\sigma_\exists'}
      ~\land~ \overline{\sigma_\forall'},\; \overline{\sigma_\exists'}
      \models \Psi
\end{align*}
\end{theorem}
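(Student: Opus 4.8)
The plan is to prove soundness in two stages. First I would show that any derivable \rhle{} triple entails the corresponding \emph{semantic} relational triple $\semrhletrip{\Phi}{\overline{p_\forall}}{\overline{p_\exists}}{\Psi}$ formulated over the two approximate semantics, and then transport this semantic fact down to an arbitrary concrete implementation context $I$ using the overapproximation and underapproximation results already in hand (Theorem~\ref{thm:ACompat} and Theorem~\ref{thm:ECompat}).

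For the first stage I would induct on the derivation of $\rhletrip{\Phi}{\overline{p_\forall}}{\overline{p_\exists}}{\Psi}$ using the rules of \autoref{fig:RHLERules}. The \textsc{Finish} case is immediate: $\overline{\sskip}$ has only identity executions, $\Phi = \Psi$, and the singleton $\Sigma = \{\overline{\sigma_\exists}\}$ witnesses the existential. The \textsc{SkipIntro} case follows because appending \sskip{} to the end of a program changes neither its over- nor its under-approximate executions. The two \textsc{Step} rules are the substance. For \textsc{Step}$\forall$, given an overapproximate execution of $p_1 \circledast \ldots \circledast \seq{s_i}{s_i'} \circledast \ldots \circledast p_n$ ending in $\overline{\sigma_\forall'}$, I split the $i$th component at the semicolon into an execution of $s_i$ from $\sigma_i$ to some $\sigma_i''$ followed by the rest; soundness of the universal Hoare logic applied to the premise $\hltrip{\Phi\,|_i\,\ldots}{s_i}{\Phi'\,|_i\,\ldots}$ shows the tuple with $\sigma_i''$ in position $i$ satisfies $\Phi'$, and the induction hypothesis supplies the existential executions. \textsc{Step}$\exists$ is dual: from the existential initial states, soundness of the existential Hoare logic on $\hletrip{\ldots}{s_i}{\ldots}$ yields an underapproximate execution of $s_i$ to a set of states all of which, reinserted at position $i$, satisfy $\Phi'$; threading this through the induction hypothesis and recomposing the resulting sets via \textsc{ESeq$_\exists$} for the relational under-approximate semantics produces the required $\Sigma$.

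For the second stage, fix $I$ compatible with both $S_\forall$ and $S_\exists$, initial states satisfying $\Phi$, and a tuple of concrete final states with $I \vdash \overline{\sigma_\forall}, \overline{p_\forall} \Downarrow \overline{\sigma_\forall'}$. Applying Theorem~\ref{thm:ACompat} componentwise shows $\overline{\sigma_\forall'}$ is also a tuple of overapproximate final states under $S_\forall$. Feeding this into the semantic triple from the first stage yields a set $\Sigma$ with $S_\exists \vdash \overline{\sigma_\exists}, \overline{p_\exists} \Downarrow_\exists \Sigma$ and every member of $\Sigma$, paired with $\overline{\sigma_\forall'}$, satisfying $\Psi$. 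Because the relational under-approximate semantics of independent programs is the product of the single-program semantics, Theorem~\ref{thm:ECompat} applied componentwise produces a concrete tuple $\overline{\sigma_\exists'} \in \Sigma$ with $I \vdash \overline{\sigma_\exists}, \overline{p_\exists} \Downarrow \overline{\sigma_\exists'}$, which is exactly the witness the conclusion demands.

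I expect the main difficulty to be the \textsc{Step}$\exists$ case of the first-stage induction, where the set-valued under-approximate semantics interacts with the relational bookkeeping: the existential triple for the stepped statement must be combined with the induction hypothesis so as to preserve the ``there exists one $\Sigma$ such that all of its members satisfy $\Psi$'' shape, and one must check that a choice of choice-variable values made while stepping over $s_i$ stays consistent with those made later. A prerequisite here is soundness of the two non-relational Hoare logics themselves; this is routine except for \textsc{$\exists$While}, where the well-founded relation $R$ is what bounds the length of the terminating loop path that is threaded through the under-approximate semantics.
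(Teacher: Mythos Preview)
Your proposal is correct and follows essentially the same two-stage strategy as the paper: first establish by induction on the \rhle{} derivation the semantic relational statement over the approximate semantics (the paper states this as the intermediate property~(\ref{eqn:IH})), then transport it to a concrete $I$ using Theorem~\ref{thm:ACompat} componentwise on the universal side and Theorem~\ref{thm:ECompat} on the existential side. The paper's own proof is terser and does not spell out the individual induction cases or flag the reliance on soundness of the two non-relational logics, but your more detailed sketch of \textsc{Step}$\forall$/\textsc{Step}$\exists$ and the \textsc{ESeq$_\exists$} recomposition is exactly what that induction unfolds to.
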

\begin{proof}
  We first prove a stronger property by induction on the triple
  $\rhletrip{\Phi}{\overline{p_\forall}}{\overline{p_\exists}}{\Psi}$,
  namely that there exist appropriate existential executions of $p_\exists$ for
  every collection of final states of $p_\forall$ produced by
  an overapproximate execution, for any set of
  initial states satisfying the precondition $\Phi$:
  \begin{align}
  \begin{split}
    & \forall \overline{\sigma_\forall}\, \overline{\sigma_\exists}.\:
      \overline{\sigma_\forall}\, \overline{\sigma_\exists} \models \Phi ~~\land~~
      \forall \overline{\sigma_\forall'}.\:
      S_\forall \vdash \overline{\sigma_\forall}, \overline{p_\forall}
      ~ \Downarrow_\forall~ \overline{\sigma_\forall'} \implies \\
    & S_\exists \vdash \sigma_\exists, \overline{p_\exists} \productive
      \{\overline{\sigma_\exists'} ~|~ \overline{\sigma_\forall'},
      \overline{\sigma_\exists'} \models \Psi \}
    \label{eqn:IH}
  \end{split}
  \end{align}
  \noindent By \autoref{thm:ACompat}, the fact that $I$ is
  $\forall$-compatible with $S_\forall$, and our assumption that
  $I \vdash \overline{\sigma_\forall}, \overline{p_\forall}
  \Downarrow \overline{\sigma_\forall'}$, it follows that:
  \begin{align}
    S_\forall \vdash \overline{\sigma_\forall}, \overline{p_\forall}
    \Downarrow_\forall \overline{\sigma_\forall'}
    \label{eqn:EmptyEval}
  \end{align}
  \noindent Armed with (\ref{eqn:IH}) and (\ref{eqn:EmptyEval}) and the
  assumption that $I$ is $\exists$-compatible with $S_\exists$, by
  \autoref{thm:ECompat} we can conclude the desired result, i.e.
  $\exists \overline{\sigma_\exists'}.\: I \vdash \sigma_\exists,
  \overline{p_\exists} \Downarrow \overline{\sigma_\exists'} ~\land~
  \overline{\sigma_\forall'},\; \overline{\sigma_\exists'} \models
  \Psi$.
\end{proof}

\section{Verification Example}
\label{sec:VerificationExample}
To illustrate the operation of \texttt{VCGen}, consider proving the
following simple refinement assertion, where the contexts $S_\forall$
and $S_\exists$ contain the specifications for \textsf{randB} from
\autoref{ex:RandBSpecs}:
\[
\rhletrip{\top}{\asgn{y_1}{\textsf{randB}(4)}}{\asgn{y_2}{\textsf{randB}(10)}}{y_1 = y_2}
\]
\texttt{RHLEVerify} begins by calling \texttt{VCGen} with:
\begin{align*}
p_{\forall} &\equiv \{\sskip{}; \asgn{y_1}{\textsf{randB}(4)} \} \\
p_{\exists} &\equiv \{\sskip{}; \asgn{y_2}{\textsf{randB}(10)} \} \\
\overline{\Psi} &\equiv (\varnothing, \varnothing, y_1 = y_2)
\end{align*}
\texttt{VCGen} matches the \textsf{randB} call in $p_\forall$ (line
5) and recurses with the new postcondition built by $wp_\forall$
(omitting the trivial precondition for brevity):
\begin{align*}
p_{\forall} &\equiv \{ \sskip{} \} \\
p_{\exists} &\equiv \{ \asgn{y_2}{\textsf{randB}(10)} \} \\
\overline{\Psi} &\equiv (\{ v_1 \}, \varnothing, 0 \le v_1 < 4 \implies v_1 = y_2)
\end{align*}
\texttt{VCGen} now chooses the existential call to \textsf{randB} (line 7), and
recurses again with a postcondition built by $wp_\exists$:
\begin{align*}
p_{\forall} &\equiv \{ \sskip{} \} \\
p_{\exists} &\equiv\{ \sskip{} \} \\
\overline{\Psi} &\equiv (\{ v_1 \}, \{ v_2 \}, 0 \le v_2 < 10 \wedge (0 \le v_1 < 4 \implies v_1 = v_2))
\end{align*}
Since both programs are now \sskip{}, \texttt{VCGen} terminates,
returning
$(\{ v_1 \}, \{ v_2 \}, 0 \le v_2 < 10 \wedge (0 \le v_1 < 4 \implies
v_1 = v_2))$.  \texttt{RHLEVerify} uses this formula to construct the
following query:
\[
\forall v_1\; \exists v_2.\; 0 \le v_2 < 10 \wedge (0 \le v_1 < 4 \implies v_1 = v_2)
\]
which it hands off to \texttt{Verify}. Note how this formula
encodes the essence of the \AEH{} question posed by the original
triple: for all allowed \textsf{randB} return values $v_1$ in the
universal execution, we want to know if there exists an allowed
instantiation of the choice variable $v_2$ in the existential
execution that brings both programs to the same final state. In this
case, the $\forall \exists$ formula is valid, and verification
succeeds.

\section{Example ORHLE Input}
\label{sec:orhleInput}
The following listing verifies a noninterference property, namely that
the program never leaks any information about the variable
\lstinline[style=funimp]|high|. Note that the underapproximation of
\lstinline|flipCoin| is required. If linked to a \lstinline|flipCoin|
implementation that always returns 0, for example, attackers could
always know whether or not the initial value of \lstinline|low| was
less than \lstinline|high| by observing \lstinline[style=funimp]|low|.

\begin{lstlisting}[style=orhle]
forall: run[1];
exists: run[2];

pre:  (= run!1!low run!2!low);
post: (= run!1!low run!2!low);

aspecs:
  flipCoin() {
    pre:  true;
    post: (or (= ret! 0) (= ret! 1));
  }

especs:
  flipCoin() {
    choiceVars: n;
    pre:  (or (= n 0) (= n 1));
    post: (= ret! n);
  }

fun run(high, low) {
  if (low < high) then
    low := 0;
  else
    low := 1;
  end
  flip := call flipCoin();
  if (flip == 0) then
    low := 1 - low;
  endif
}
\end{lstlisting}

Conversely, \orhle{} identifies a violation of noninterference in the
listing below. The program might leak the value of
\lstinline[style=funimp]|high|, depending on the outcome of
\lstinline|flipCoin|.

\begin{lstlisting}[style=orhle]
forall: run[1];
exists: run[2];

pre:  (= run!1!low run!2!low);
post: (= run!1!low run!2!low);

aspecs:
  flipCoin() {
    pre:  true;
    post: (or (= ret! 0) (= ret! 1));
  }

especs:
  flipCoin() {
    choiceVars: n;
    pre:  (or (= n 0) (= n 1));
    post: (= ret! n);
  }

fun run(high, low) {
  flip := call flipCoin();
  if (flip == 0) then
    low := high + low;
  endif
}
\end{lstlisting}

\end{document}